\documentclass[a4paper,UKenglish,cleveref, autoref, thm-restate]{lipics-v2021}


\bibliographystyle{plainurl}
\hideLIPIcs


\usepackage[noend]{algorithmic}
\usepackage{algorithm}

\newtheorem{wspd}{WSPD-Property}
\crefname{wspd}{WSPD-property}{WSPD-properties}
\Crefname{wspd}{WSPD-Property}{WSPD-Properties}

\DeclareMathOperator*{\odil}{odil}
\newcommand{\bigo}{\ensuremath{\mathcal{O}}}
\newcommand{\oriented}[1]{\overrightarrow{#1}}

\DeclareMathOperator*{\argmin}{arg\,min}

\title{Computing Oriented Spanners and their Dilation}


\author{Kevin Buchin}{Technical University of Dortmund, Germany}{kevin.buchin@tu-dortmund.de}{https://orcid.org/0000-0002-3022-7877}{}
\author{Antonia Kalb}{Technical University of Dortmund, Germany}{antonia.kalb@tu-dortmund.de}{https://orcid.org/0009-0009-0895-8153}{This work was supported by a fellowship of the German Academic Exchange Service (DAAD).}
\author{Anil Maheshwari}{Carleton University, Canada}{anil@scs.carleton.ca}{https://orcid.org/0000-0002-1274-4598}{}
\author{Saeed Odak}{University of Ottawa, Canada}{Saeed.Odak@gmail.com}{}{}
\author{Carolin Rehs}{Technical University of Dortmund, Germany}{carolin.rehs@tu-dortmund.de}{https://orcid.org/0000-0002-8788-1028}{}
\author{Michiel Smid}{Carleton University, Canada}{michiel@scs.carleton.ca}{}{}
\author{Sampson Wong}{BARC, University of Copenhagen, Denmark}{sawo@di.ku.dk}{https://orcid.org/0000-0003-3803-3804}{}

\authorrunning{Buchin, Kalb, Maheshwari, Odak, Rehs, Smid, Wong} 

\Copyright{Kevin Buchin, Antonia Kalb, Anil Maheshwari, Saeed Odak, Carolin Rehs, Michiel Smid, Sampson Wong} 

\ccsdesc[500]{Theory of computation $\rightarrow$ Computational geometry}

\keywords{spanner, oriented graph, dilation, orientation, well-separated pair decomposition, minimum-perimeter triangle} 

\category{} 

\relatedversion{} 


\funding{\textit{Anil Maheshwari}, \textit{Michiel Smid}: funded by the Natural Sciences and Engineering Research Council of Canada (NSERC).}


\nolinenumbers 


\begin{document}

\maketitle

\begin{abstract}
Given a point set $P$ in a metric space and a real number $t \geq 1$, an \emph{oriented $t$-spanner} is an oriented graph $\overrightarrow{G}=(P,\overrightarrow{E})$,
where for every pair of distinct points $p$ and $q$ in $P$, the shortest oriented closed walk in $\overrightarrow{G}$ that contains $p$ and $q$ is at most a factor $t$ longer than the perimeter of the smallest triangle in $P$ containing $p$ and $q$. 
The \emph{oriented dilation} of a graph $\overrightarrow{G}$ is the minimum $t$ for which $\overrightarrow{G}$ is an oriented $t$-spanner.

For arbitrary point sets of size $n$ in $\mathbb{R}^d$, where $d \geq 2$ is a constant, the only known oriented spanner construction
is an oriented $2$-spanner with $\binom{n}{2}$ edges. 
Moreover, there exists a set $P$ of four points in the plane, for which the oriented dilation is 
larger than $1.46$, for any oriented graph on $P$. 

We present the first algorithm that computes, in Euclidean space, a sparse oriented spanner whose oriented dilation is bounded by a constant. 
More specifically, for any set of $n$ points in $\mathbb{R}^d$, where $d$ is a constant, we construct an oriented $(2+\varepsilon)$-spanner with $\mathcal{O}(n)$ edges in $\mathcal{O}(n \log n)$ time and $\mathcal{O}(n)$ space. 
Our construction uses the well-separated pair decomposition and an algorithm that computes a $(1+\varepsilon)$-approximation of the minimum-perimeter triangle in $P$ containing two given query points in $\mathcal{O}(\log n)$ time.

While our algorithm is based on first computing a suitable undirected graph and then orienting it, we show that, in general, computing the orientation of an undirected graph that minimises its oriented dilation is NP-hard, even for point sets in the Euclidean plane.

We further prove that even if the orientation is already given, computing the oriented dilation is APSP-hard for points in a general metric space. We complement this result with an algorithm that approximates the oriented dilation of a given graph in subcubic time for point sets in $\mathbb{R}^d$, where $d$ is a constant.
\end{abstract}



\section{Introduction}

While geometric spanners have been researched for decades (see~\cite{Bose.2013,Narasimhan.2007} for a survey), directed versions have only been considered more recently. This is surprising since, in many applications, edges may be directed or even oriented if two-way connections are not permitted.  
Oriented spanners were first proposed in ESA'23~\cite{ESA23} and have since been studied in~\cite{BuchinKLR24.CCCG} and~\cite{BuchinKRS24.EuroCG}.

Formally, let $P$ be a point set in a metric space and let $\oriented{G}$ be an oriented graph with vertex set $P$. For a real number $t \geq 1$, we say that $\oriented{G}$ is an \emph{oriented $t$-spanner}, if for every pair $p,q$ of distinct points in $P$, the shortest oriented closed walk in $\oriented{G}$ that contains $p$ and $q$ is at most a factor $t$ longer than the shortest simple cycle that contains $p$ and $q$ in the complete undirected graph on $P$. Since we consider point sets in a metric space, we observe that the latter is the minimum-perimeter triangle in $P$ containing $p$ and $q$. 

Recall that a $t$-\emph{spanner} is an undirected graph $G$ with vertex set $P$, in which for any two points $p$ and $q$ in $P$, there exists a path between $p$ and $q$ in $G$ whose length is at most $t$ times the distance $|pq|$. Several algorithms are known that compute $(1+\varepsilon)$-spanners with $\bigo(n)$ edges for any set of $n$ points in $\mathbb{R}^d$. Examples are the greedy spanner~\cite{Narasimhan.2007}, $\Theta$-graphs~\cite{KeilG92} and spanners based on the well-separated pair decomposition~(WSPD)~\cite{Smid18}. Moreover, it is NP-hard to compute an undirected $t$-spanner with at most $m$ edges~\cite{Giannopoulos.2010}.

In the oriented case, while it is NP-hard to compute an oriented $t$-spanner with at most $m$ edges \cite{ESA23}, the problem of constructing sparse oriented spanners has remained open until now. 
For arbitrary point sets of size $n$ in $\mathbb{R}^d$, where $d \geq 2$ is a constant, the only known oriented spanner construction is a simple greedy algorithm that computes, in $\bigo(n^3)$ time, an oriented $2$-spanner with $\binom{n}{2}$ edges~\cite{ESA23}. If $P$ is the set in $\mathbb{R}^2$, consisting of the three vertices of an equilateral triangle and a fourth point in its centre, then the smallest oriented dilation for $P$ is $2 \sqrt{3} - 2 \approx 1.46$~\cite{ESA23}. 
Thus, prior to our work, no algorithms were known that compute an oriented $t$-spanner with a subquadratic number of edges for any constant $t$. 

In this paper, we introduce the first algorithm to construct sparse oriented spanners for general point sets. For any set $P$ of $n$ points in $\mathbb{R}^d$, where $d$ is a constant, the algorithm computes an oriented $(2+\varepsilon)$-spanner with $\mathcal{O}(n)$ edges in $\bigo(n \log n)$ time and $\bigo(n)$ space.

While our approach uses the WSPD~\cite{CallahanKosaraju95WSPD, Smid18}, this does not immediately yield an oriented spanner. An undirected spanner can be obtained from a WSPD by adding one edge per
well-separated pair. This does not work for constructing an oriented spanner because a path in both directions needs to be considered for every well-separated pair.
To construct such paths, we present a data structure that, given a pair of points, returns a point with an approximate minimum summed distance to the pair of the points, i.e.\@ the smallest triangle in the complete undirected graph on $P$. Our algorithm first constructs an adequate undirected sparse graph and subsequently orients it using a greedy approach. 

We also show in this paper that, in general, it is NP-hard to decide whether the edges of an arbitrary graph can be oriented in such a way that the oriented dilation of the resulting graph is below a given threshold. This holds even for Euclidean graphs.

For the problem of computing the oriented dilation of a given graph, there is a straightforward cubic time algorithm. We show a subcubic approximation algorithm.

The hardness of computing the dilation, especially in the undirected case, is a long-standing open question, and only cubic time algorithms are known~\cite{gudmundsson2018.dilation}. It is conjectured that computing the dilation is APSP-hard. In this paper, we demonstrate that computing the oriented dilation is APSP-hard.

\section{Preliminaries}

Let $(P,|\cdot|)$ be a finite metric space, where the distance between any two points $p$ and $q$ in $P$ is denoted by $|pq|$. 
 When the choice between Euclidean and general metric distances is relevant for our results, we use the adjective \emph{Euclidean} or respectively \emph{metric graph}.

 If $p$ and $q$ are distinct points in $P$, then $\Delta^*(p,q)$ denotes the shortest simple cycle containing $p$ and $q$ in the complete undirected graph on $P$; this is the triangle $\Delta_{pqx}$, where 
\[ x = \arg\min \{ \left( |px| + |qx| \right) \mid  
               x\in P \setminus \{p,q\}\} .
\]
We use $| \Delta^*(p,q) |$ to denote the perimeter of this triangle. 

An \emph{oriented graph} $\oriented{G}=(P,\oriented{E})$ is a directed graph such that for any two distinct points $p$ and $q$ in $P$, at most one of the two directed edges $(p,q)$ and $(q,p)$ is in $\oriented{E}$. In other words, if $(p,q)$ is in $\oriented{E}$, then $(q,p)$ is not in $\oriented{E}$.

For two distinct points $p$ and $q$ in $P$, we denote by $C_{\oriented{G}}(p,q)$ the \emph{shortest closed walk} containing $p$ and $q$ in the oriented graph $\oriented{G}$. The length of this walk is denoted by $| C_{\oriented{G}}(p,q) |$; if such a walk does not exist, then $| C_{\oriented{G}}(p,q) | = \infty$.

\begin{definition}[oriented dilation]\label{def:odilation}
Let $(P,|\cdot|)$ be a finite metric space and let $\oriented{G}$ be an oriented graph with vertex set $P$. 
For two points $p,q\in P$, their \emph{oriented dilation} is defined as
   \begin{equation*}
       {\odil}_{\oriented{G}} (p,q)=\frac{|C_{\oriented{G}}(p,q)|}{|\Delta^*(p,q)|}.
   \end{equation*}
    The oriented dilation of $\oriented{G}$ is defined as 
$\odil(\oriented{G})={\max} \{{\odil}_{\oriented{G}}(p,q)\mid p,q\in P\}$.
\end{definition}

An oriented graph with oriented dilation at most $t$ is called an \emph{oriented $t$-spanner}. 
Note that
there is no oriented $1$-spanner for sets of at least $4$ points.

 Let $G=(P,E)$ be an undirected graph. We call $\oriented{G}=(P,\oriented{E})$ an \emph{orientation} of $G$, if for each undirected edge $\{p,q\}\in E$ it holds either $(p,q)\in \oriented{E}$ or $(q,p)\in \oriented{E}$.


\subsection{Well-Separated Pair Decomposition}

In 1995, Callahan and Kosaraju~\cite{CallahanKosaraju95WSPD} introduced the \emph{well-separated pair decomposition}~(WSPD) as a tool to solve various distance problems on multidimensional point sets. 

For a real number $s>0$, two point sets $A$ and $B$ form an \emph{$s$-well-separated pair}, if $A$ and $B$ can each be covered by balls of the same radius, say $\rho$, such 
that the distance between the balls is at least $s \cdot \rho$. 
An $s$-\emph{well-separated pair decomposition} for a set $P$ of points is a sequence of $s$-well-separated pairs $\{A_i,B_i\}$, such that $A_i \cap B_i = \emptyset$
and for any two points $p,q \in P$, there is a unique index $i$ such that $p \in A_i$ and $q \in B_i$, or $p \in B_i$ and $q \in A_i$.

The following properties are used repeatedly in this paper.

\begin{wspd}[Smid~\cite{Smid18}]\label{theo:wspd-properties}
    Let $s>0$ be a real number and let $\{A,B\}$ be an $s$-well-separated pair. For points $a,a'\in A$ and $b,b'\in B$, it holds that
    \begin{enumerate}[(i)]
        \item $|aa'|\leq 2/s\cdot |ab|$, and 
        \item $|a'b'|\leq (1+4/s)|ab|$.
    \end{enumerate}
\end{wspd}

\noindent Callahan and Kosaraju~\cite{CallahanKosaraju95WSPD} showed a fast WSPD-construction based on a split tree. 

\begin{wspd}[Callahan and Kosaraju~\cite{CallahanKosaraju95WSPD}] \label{theo:wspd-construction}
   Let $P$ be a set of $n$ points in $\mathbb{R}^d$, where $d$ is a constant, and let $s>0$ be a real number. An $s$-well-separated pair decomposition for $P$ consisting of $\bigo(s^d n)$ pairs, can be computed in $\bigo(n\log n + s^d n)$ time.
\end{wspd}


\subsection{Approximate Nearest Neighbour Queries}\label{sec:ANN}

For approximate nearest neighbour queries, we want to preprocess a given point set $P$ in a metric space, such that for any query point $q$ in the metric space, we can report quickly its \emph{approximate nearest neighbour} in $P$. That is, if $q^*$ is the exact nearest neighbour of $q$ in $P$, then the query algorithm returns a point $q'$ in $P$ such that 
$|qq'| \leq (1+\varepsilon) \cdot |qq^*|$. 
Such a data structure is offered in~\cite{AryaMountDS1998} for the case when $P$ is a point set in the Euclidean space $\mathbb{R}^d$: 

\begin{lemma}[Arya, Mount, Netanyahu, Silverman and Wu~\cite{AryaMountDS1998}]\label{theo:nearest-neighbour-query}
    Let $P$ be a set of $n$ points in $\mathbb{R}^d$, where $d$ is a constant. In $\bigo(n \log n)$ time, a data structure of size 
    $\bigo(n)$ can be constructed that supports the following operations:
    \begin{itemize}
        \item Given a real number $\varepsilon >0$ and a query point $q$ in $\mathbb{R}^d$, return a $(1+\varepsilon)$-approximate nearest neighbour of $q$ in $P$ in $\bigo(\log n)$ time.
        \item Inserting a point into $P$ and deleting a point from $P$ in $\bigo(\log n)$ time.
    \end{itemize}
\end{lemma}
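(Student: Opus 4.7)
The plan is to construct a Balanced Box-Decomposition (BBD) tree for $P$, a hierarchical spatial data structure tailored to this problem. The BBD tree is built by recursively subdividing a bounding box of $P$ using two alternating operations: a \emph{split}, which halves a box along its longest side, and a \emph{shrink}, which replaces a box by an ``inner'' box contained in it, producing an annular ``outer'' cell as the other child. The shrink operation is the crucial ingredient: it guarantees that every cell has bounded aspect ratio and that the geometric size of cells decreases by a constant factor every $\bigo(1)$ levels of the tree. By amortising the cost of choosing appropriate shrink boxes and maintaining sorted coordinate lists throughout the recursion, the tree can be built in $\bigo(n \log n)$ time, has depth $\bigo(\log n)$, and uses $\bigo(n)$ space, since each of the $n$ points ends up in a distinct leaf.

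For an approximate nearest neighbour query with parameter $\varepsilon>0$, I would run a priority-queue based search from the root. The queue orders cells by the minimum distance from the query point $q$ to the cell; at each step the closest cell is popped and, if it is internal, replaced by its two children, while if it is a leaf, the stored point is used to update the current best distance $r$. The search terminates as soon as the top of the queue has distance at least $r/(1+\varepsilon)$. Correctness is immediate because every unexplored point lies in an unexplored cell, hence at distance at least $r/(1+\varepsilon)$ from $q$. The query-time bound follows from a packing argument: at termination only cells intersecting the annulus between radii $r/(1+\varepsilon)$ and $r$ around $q$ have been expanded, and the bounded aspect ratio together with the geometric size-decrease imply that only $\bigo((1/\varepsilon)^d)$ cells of each size class can meet this annulus; summed over the $\bigo(\log n)$ size classes this yields $\bigo((1/\varepsilon)^d \log n)$ expansions, which is $\bigo(\log n)$ for constants $\varepsilon$ and $d$.

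For dynamic updates, each insertion or deletion touches only the root-to-leaf path of the affected point; to preserve the $\bigo(\log n)$ height, the BBD tree uses local rotation-like rebalancing that restores the aspect-ratio and size-decrease invariants while altering only $\bigo(1)$ cells per fix-up. The main obstacle in this plan is the packing argument for the query: without the shrink operation one cannot bound the number of relevant cells, because repeated splits of an elongated region could produce arbitrarily many cells of comparable diameter intersecting the annulus. Proving that alternating shrink and split operations genuinely achieve the required packing, while simultaneously keeping the tree balanced under updates and the construction efficient, is the technical heart of the result.
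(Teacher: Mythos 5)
Since this lemma is cited black-box from Arya, Mount, Netanyahu, Silverman, and Wu, the paper gives no proof of its own. Your sketch correctly identifies the BBD tree as the underlying structure and faithfully captures the two key ideas in that source: alternating split and shrink operations to enforce bounded aspect ratio and geometric size decay, and the priority-search query whose $\bigo(\log n)$ bound rests on a packing argument for cells meeting the annulus around the current best candidate. The construction and query portions of your sketch are consistent with the cited approach, even if the constants (a factor like $\lceil 1+6d/\varepsilon\rceil^d$ rather than $(1/\varepsilon)^d$) are glossed over, which is harmless since $d$ and $\varepsilon$ are treated as constants.

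The genuine gap is in the dynamic-update paragraph. You invoke ``local rotation-like rebalancing'' to keep the tree at depth $\bigo(\log n)$, but rotations are not an available operation on a BBD tree: the parent-child relation encodes geometric containment (a child's cell is a box or a box-difference inside the parent's box), so one cannot exchange the roles of a parent and a child, nor freely reattach subtrees as one would in a balanced binary search tree — any such move would break the containment, aspect-ratio, and size-decay invariants on which your own packing argument for the query depends. The dynamization in the cited work proceeds by a different and more delicate mechanism (restructuring along the affected root-to-leaf path combined with amortisation or partial rebuilding to control depth), and to complete your argument you would need to exhibit a concrete update procedure that restores the split/shrink invariants after an insertion or deletion and prove it runs in $\bigo(\log n)$ time. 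As written, that step is asserted rather than established and, taken literally, would fail.
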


\section{Sparse Oriented Constant Dilation Spanner}
\label{sec:2+eps-spanner}



This section presents the first construction for a sparse oriented $(2+\varepsilon)$-spanner for general point sets in $\mathbb{R}^d$. 
Our algorithm consists of the following four steps:
\begin{enumerate}
    \item Compute the WSPD on the given point set.
    \item For two points of each well-separated pair, approximate their minimum-perimeter triangle by our algorithm using approximate  nearest neighbour queries presented in \cref{sec:min-triangle}. 
    \item Construct an undirected graph whose edge set consists of edges of approximated triangles. 
    \item Orient the undirected graph using a greedy algorithm (see \cref{theo:2+eps-orientation}).
\end{enumerate}

In Step~4, we modify the following greedy algorithm from~\cite{ESA23}. 
Sort the $\binom{n}{3}$ triangles of the complete graph in ascending order by their perimeter. 
For each triangle in this order, orient it clockwise or anti-clockwise if possible; otherwise, we skip this triangle.
At the end of the algorithm, all remaining edges are oriented arbitrarily. 
 In~\cite{ESA23}, they show that this greedy algorithm yields an oriented $2$-spanner. Note that this graph has $\binom{n}{2}$ edges. 

 
Our modification of the greedy algorithm is to only consider the approximate minimum-perimeter triangles, rather than all $\binom{n}{3}$ triangles (see \cref{alg:2+eps-orientation}). \Cref{theo:2+eps-orientation} states that, for any two points whose triangles are $(1+\varepsilon)$-approximated in Step~2, the dilation between those points is at most~$(2+2 \varepsilon)$ in the resulting orientation.

\begin{lemma}\label{theo:2+eps-orientation}
     Let $P$ be a set of $n$ points in $\mathbb{R}^d$, let $\varepsilon_1>0$ be a real number, and let $L$ be a list of $m$ point triples $(p,q;r)$, with $p$, $q$, and $r$ being pairwise distinct points in $P$. Assume that for each $(p,q;r)$ in $L$, 
     \[ |\Delta_{pqr}| \leq (1+\varepsilon_1)\cdot|\Delta^*(p,q)| . 
     \]
     Let $G=(P,E)$ be the undirected graph whose edge set consists of all edges of the $m$ triangles $\Delta_{pqr}$ defined by the triples $(p,q;r)$ in $L$. 
     In $\bigo(m \log n)$ time, we can compute an orientation $\oriented{G}$ of 
     $G$, such that $\odil_{\oriented{G}}(p,q)\leq 2+2\varepsilon_1$ for each $(p,q;r)$ in $L$.
\end{lemma}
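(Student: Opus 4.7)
The plan is to adapt the greedy cycle-orientation algorithm from~\cite{ESA23}, but applied only to the triangles specified by $L$ rather than to all $\binom{n}{3}$ triples. First, sort the $m$ triples of $L$ in ascending order of $|\Delta_{pqr}|$. Then scan them in this order: for each $(p,q;r)$, look up the current orientations of the three edges of $\Delta_{pqr}$; if at least one of the two cyclic orientations $p\to q\to r\to p$ and $p\to r\to q\to p$ is consistent with the already-oriented edges, commit to such a cycle and orient any remaining edges of $\Delta_{pqr}$ accordingly, otherwise skip the triple. After all triples have been processed, the edges of $G$ that are still unoriented receive an arbitrary direction. Maintaining a balanced binary search tree (or hash table) keyed by endpoint pairs allows each query and update to be performed in $O(\log n)$ time, so together with the $O(m\log n)$ sort the total running time is $O(m\log n)$.

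For the dilation bound, consider an arbitrary triple $(p,q;r)\in L$. If $\Delta_{pqr}$ was successfully oriented as a cycle, then this cycle is a closed oriented walk through $p$ and $q$ of length $|\Delta_{pqr}|\leq (1+\varepsilon_1)|\Delta^*(p,q)|$, which already yields $\odil_{\oriented{G}}(p,q)\leq 1+\varepsilon_1$. Otherwise $\Delta_{pqr}$ was skipped, which means at least two edges of the triangle had already been oriented in a way that forbids both cyclic completions. A direct check on a triangle shows that this forces two edges of $\Delta_{pqr}$ incident to a common vertex $v\in\{p,q,r\}$ to be oriented both toward $v$ or both away from $v$. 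Each of these two edges was committed earlier as part of a fully oriented cycle of some triangle of $L$, and because triples are processed in sorted order, each of these two earlier triangles has perimeter at most $|\Delta_{pqr}|\leq (1+\varepsilon_1)|\Delta^*(p,q)|$.

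The key step is then to paste these two earlier triangles together into a single closed oriented walk that contains both $p$ and $q$ and whose length is at most the sum of their perimeters, hence at most $2(1+\varepsilon_1)|\Delta^*(p,q)|$. This is immediate when the clash vertex $v$ is $p$ or $q$, or when $v=r$ but both clashing edges leave $r$: in those situations the two earlier cycles share vertex $v$ and can simply be concatenated at $v$. I expect the main obstacle to be the remaining sub-case, where $v=r$ and both clashing edges point into $r$; there the two earlier cycles have the form $p\to r\to s\to p$ and $q\to r\to s'\to q$ and end at different vertices, so the walk has to be closed via the edge $\{p,q\}$, which is in $G$ but may be oriented only later, possibly in the final arbitrary step. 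I will handle this by observing that, regardless of the eventual direction of $\{p,q\}$, one of the two four-edge walks $p\to q\to r\to s\to p$ or $q\to p\to r\to s'\to q$ lies in $\oriented{G}$, and a short computation using the identity $|xy|+|yz|+|zx|=|\Delta_{xyz}|$ bounds its length by $|\Delta_{pqr}|+|\Delta_{prs}|-2|pr|\leq 2(1+\varepsilon_1)|\Delta^*(p,q)|$. Combining the cases yields $\odil_{\oriented{G}}(p,q)\leq 2+2\varepsilon_1$ for every triple in $L$, as required.
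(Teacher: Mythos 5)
Your proposal follows the paper's proof closely: sort the listed triangles by perimeter, orient greedily, and bound the dilation of a skipped triple by splicing together two earlier consistently oriented triangles. The one place you diverge --- and where you introduce a complication that is not needed --- is the sub-case in which the two pre-oriented clashing edges are $\{p,r\}$ and $\{q,r\}$. You write that the two earlier cycles ``end at different vertices,'' so the walk ``has to be closed via the edge $\{p,q\}$.'' But the two earlier cycles are \emph{closed} oriented walks, and both pass through $r$; one can simply traverse the first cycle from $r$ back to $r$ and then the second cycle from $r$ back to $r$, producing a single closed oriented walk that visits $p$, $q$ and $r$ with length $|\Delta_{prs}|+|\Delta_{qrs'}|\le 2|\Delta_{pqr}|\le 2(1+\varepsilon_1)|\Delta^*(p,q)|$. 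This is exactly the paper's uniform argument and requires no reasoning about the eventual orientation of $\{p,q\}$. Your alternative --- observing that one of two $4$-edge walks must exist once $\{p,q\}$ is oriented --- happens to give the same bound, so there is no actual gap, but it is more fragile: it leans on an edge whose direction is decided only at the end of the algorithm, whereas the concatenation-at-$r$ argument uses only edges already fixed when $\Delta_{pqr}$ is processed.
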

\begin{proof}
 We prove that \cref{alg:2+eps-orientation} computes such an orientation $\oriented{G}$. 
Let $(p,q;r)\in L$ be a point triple. By definition, 
 $|\Delta_{pqr}|\leq (1+\varepsilon_1)\cdot|\Delta^*(p,q)|$.
    
   If $\Delta_{pqr}$ is oriented consistently then $\odil_{\oriented{G}}(p,q)=1+\varepsilon_1$. 
   Otherwise, at least two of the edges of $\Delta_{pqr}$ were already oriented, when \cref{alg:2+eps-orientation} processed $\Delta_{pqr}$.  Therefore those edges are incident to previously processed, consistently oriented triangles. 
   By concatenating those triangles, we obtain a closed walk containing $p$ and $q$. Since both triangles are processed first, their length is at most $|\Delta_{pqr}|$ and the closed walk is at most twice as long. Thus, $\odil_{\oriented{G}}(p,q) \leq 2\cdot(1+\varepsilon_1)= 2+2\varepsilon_1$

   Sorting the triangles costs $\bigo(m\log m)$ time. We store the points of $P = \{p_1,p_2,\ldots,p_n\}$ in an array. With each entry $i$, we store all edges $(p_i,p_j)$ that are incident to $i$. We store these in a balanced binary search tree, sorted by $j$.  By this, an edge can be accessed and oriented in $\bigo(\log n)$ time. Since $m=\bigo(n)$, the runtime is $\bigo(m (\log m + \log n))=\bigo(m \log n)$.
\end{proof}

\begin{algorithm}[ht]
	\caption{Greedy $(2+2\varepsilon_1)$-Orientation}\label{alg:2+eps-orientation}
     \begin{algorithmic}
    		\REQUIRE set $P$ of $n$ points, real $\varepsilon_1>0$,\\
    		list $L$ of point triples $(p,q;r)$, s.t.\ for each $(p,q;r)\in L$ holds $|\Delta_{pqr}|\leq (1+\varepsilon_1)\cdot|\Delta^*(p,q)|$
    		\ENSURE oriented graph  $G=(P,E)$ whose edge set contains all edges of the triangles $\Delta_{pqr}$ where $(p,q;r)\in L$ and for each $(p,q;r)\in L$ holds $\odil_G(p,q)\leq 2+2\varepsilon_1$

        \STATE Sort the triangles $\Delta_{pqr}$ of the triples in $L$  ascending by lengths 
        \FORALL{$\Delta_{pqr}$ with $(p,q;r)\in L$}
            \IF{no edge of $\Delta_{pqr}$ is oriented}
              \STATE orient $(p,q)$ arbitrarily and orient $\Delta_{pqr}$ consistently according to $(p,q)$
           \ELSE
              \STATE orient $\Delta_{pqr}$ consistently, if this is still possible
            \ENDIF
        \ENDFOR
        \STATE orient the remaining edges arbitrarily
\end{algorithmic}
\end{algorithm}

\begin{remark*}
    The above lemma only guarantees an upper bound on 
    $\odil_{\oriented{G}}(p,q)$ for $(p,q;r)$ in~$L$. For such a triple, 
    $\odil_{\oriented{G}}(p,r)$ and 
    $\odil_{\oriented{G}}(q,r)$ can be arbitrarily large. 
\end{remark*}







\begin{algorithm}[ht]
	\caption{Sparse $(2+\varepsilon)$-Spanner}
    \label{alg:wspd-oriented}
     \begin{algorithmic}
    		\REQUIRE Set $P$ of $n$ points in $\mathbb{R}^d$, positive reals $\varepsilon<2$, $\varepsilon_1$, $s$
    		\ENSURE $(2+\varepsilon)$-spanner for $P$ with $\bigo(n)$ edges

        \STATE Compute an $s$-WSPD $\{A_i,B_i\}$ for $1 \leq i \leq m = \bigo(n)$
        \STATE Compute the data structure $M$ on $P$ to approximate nearest neighbours (see \cref{theo:nearest-neighbour-query}~\cite{AryaMountDS1998})%
        \STATE $L=\emptyset$

        \FOR{$i=1$ \textbf{to} $m$}
            \STATE Pick $\min\{|A_i|,2\}$ points of $A_i$
            \STATE Pick $\min\{|B_i|,2\}$ points of $B_i$
            \STATE $K=$ set of all picked points
            \FORALL{$\{p,q\}\in K\times K$ with $p\neq q$}
                \STATE $\Delta_{pqr}=$ $(1+\varepsilon_1)$-approximation of $\Delta^*(p,q)$ computed by \cref{alg:approx-min-triangle} given $M,p,q, \varepsilon_1$%
                \STATE Add the point triple $(p,q;r)$ to $L$
            \ENDFOR
        \ENDFOR
        
        \RETURN  oriented graph generated by \cref{alg:2+eps-orientation}, 
        given $P, \varepsilon_1, L$
\end{algorithmic}
\end{algorithm}

Now, we show that \cref{alg:wspd-oriented} constructs a sparse oriented $(2+\varepsilon)$-spanner for point sets in the Euclidean space $\mathbb{R}^d$:




\begin{theorem}\label{theo:2+eps-spanner}
    Given a set $P$ of $n$ points in $\mathbb{R}^d$ and a real number $0<\varepsilon<2$, a $(2+\varepsilon)$-spanner for~$P$ with $\bigo(n)$ edges can be constructed in $\bigo(n \log n)$ time and $\bigo(n)$ space.
\end{theorem}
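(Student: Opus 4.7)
The plan is to analyse \cref{alg:wspd-oriented} for parameters $s$ and $\varepsilon_1$ chosen as constants depending on $\varepsilon$, fixed at the end of the dilation analysis. The argument splits naturally into a complexity part and a dilation part.

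For the complexity bounds, choosing $s$ as a sufficiently large constant, \cref{theo:wspd-construction} yields an $s$-WSPD with $m = \bigo(n)$ pairs in $\bigo(n \log n)$ time. The nearest-neighbour data structure from \cref{theo:nearest-neighbour-query} is built in $\bigo(n\log n)$ time using $\bigo(n)$ space, and, in combination with the approximate-minimum-triangle algorithm of \cref{sec:min-triangle}, answers each $(1+\varepsilon_1)$-approximate minimum-perimeter-triangle query in $\bigo(\log n)$ time. Since $|K|\le 4$ for every pair, at most $\binom{4}{2}=6$ triples are added to $L$ per pair, so $|L|=\bigo(n)$ and the undirected graph has $\bigo(n)$ edges; \cref{theo:2+eps-orientation} then orients it in $\bigo(|L|\log n)=\bigo(n\log n)$ time. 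Summing over the four steps of the algorithm gives $\bigo(n\log n)$ time and $\bigo(n)$ space.

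The heart of the proof is showing $\odil_{\oriented{G}}(p,q)\leq 2+\varepsilon$ for every pair $p,q\in P$. For an arbitrary pair, I locate the unique WSPD pair $\{A_i,B_i\}$ with $p\in A_i$ and $q\in B_i$, and fix representatives $a\in K\cap A_i$ and $b\in K\cap B_i$ chosen by the algorithm. Because $(a,b;r)$ lies in $L$, \cref{theo:2+eps-orientation} guarantees an oriented closed walk through $a$ and $b$ of length at most $(2+2\varepsilon_1)(1+\varepsilon_1)\cdot|\Delta^*(a,b)|$. Two invocations of \cref{theo:wspd-properties} give $|pa|,|qb|\leq (2/s)|ab|$, and a short triangle-inequality calculation (using, for instance, the vertex $r$ of $\Delta^*(p,q)$ as a common third vertex) shows $|\Delta^*(a,b)|\leq (1+\bigo(1/s))\,|\Delta^*(p,q)|$.

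To lift the walk through $a,b$ to one through $p,q$, I plan to induct on $|A_i|+|B_i|$. In the Callahan--Kosaraju split tree the unique WSPD pair containing $\{p,a\}$ is nested inside the subtree for $A_i$ and therefore has strictly smaller combined size; the analogous statement holds for $\{q,b\}$. The inductive walks for $(p,a)$ and $(q,b)$ then concatenate with $C_{\oriented{G}}(a,b)$ through the shared vertices $a$ and $b$, producing an oriented closed walk containing $p$ and $q$. The main obstacle I anticipate is preventing blow-up in this induction, since $|\Delta^*(p,a)|$ is not automatically small compared to $|\Delta^*(p,q)|$: a naive multiplicative invariant $|C_{\oriented{G}}(p,a)|\leq (2+\varepsilon)|\Delta^*(p,a)|$ does not close. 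I expect to address this by strengthening the inductive invariant to an additive form such as $|C_{\oriented{G}}(p,q)|\leq (2+2\varepsilon_1)\,|\Delta^*(p,q)| + \kappa\cdot|pq|$ with $\kappa=\bigo(1/s)$, so that the recursion error accumulates as a convergent geometric series in $1/s$ driven by \cref{theo:wspd-properties}(i). Once the invariant is verified, choosing $\varepsilon_1=\Theta(\varepsilon)$ and $s=\Theta(1/\varepsilon)$ yields $\odil(\oriented{G})\leq 2+\varepsilon$.
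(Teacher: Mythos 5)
There is a genuine gap in the dilation analysis. You correctly identify the recursion $|C_{\oriented{G}}(p,q)|\leq |C_{\oriented{G}}(p,a)|+|C_{\oriented{G}}(a,b)|+|C_{\oriented{G}}(b,q)|$, but you then assert that the plain multiplicative invariant $|C_{\oriented{G}}(p,a)|\leq(2+\varepsilon)|\Delta^*(p,a)|$ cannot close, and propose an additive strengthening to fix it. In fact the multiplicative invariant \emph{does} close, for a reason you seem to have missed: in the case where $p$ is not picked, the set $A$ contains at least three points (namely $p$ itself together with the two picked representatives), so $\Delta^*(p,a)$ is bounded by the perimeter of a triangle lying entirely inside $A$. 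By \cref{theo:wspd-properties}(i) every edge of such a triangle has length at most $(2/s)|pq|$, giving $|\Delta^*(p,a)|\leq (6/s)|pq|\leq (3/s)|\Delta^*(p,q)|$ (this is \cref{eq:bound-Delta-same-set-pair} in the paper). The recursive term $(2+\varepsilon)|\Delta^*(p,a)|$ is therefore only an $O(1/s)$-sized additive perturbation, not a full copy of the target; $|\Delta^*(b,q)|$ is bounded the same way, and $|\Delta^*(a,b)|\leq(1+8/s)|\Delta^*(p,q)|$ from \cref{theo:delta-ab-WSPD}, so picking $\varepsilon_1=\varepsilon/4$ and $s=96/\varepsilon$ closes the induction. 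Without this observation, your additive invariant does not help either: the dominant term $(2+2\varepsilon_1)|\Delta^*(p,a)|$ in the unrolled recursion is generically comparable to $(2+2\varepsilon_1)|\Delta^*(p,q)|$ (e.g.\ taking $q$ as the third vertex of a triangle through $p$ and $a$ gives only $|\Delta^*(p,a)|\leq(1+2/s)|\Delta^*(p,q)|$), so the coefficient roughly triples at each level of the induction regardless of the $\kappa|pq|$ slack term.

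Two smaller remarks. First, your estimate $|C_{\oriented{G}}(a,b)|\leq(2+2\varepsilon_1)(1+\varepsilon_1)|\Delta^*(a,b)|$ overcounts: \cref{theo:2+eps-orientation} already bounds $\odil_{\oriented{G}}(a,b)$ relative to $|\Delta^*(a,b)|$, so $|C_{\oriented{G}}(a,b)|\leq(2+2\varepsilon_1)|\Delta^*(a,b)|$ directly — harmless, but worth noting. Second, your induction on $|A_i|+|B_i|$ via nestedness of split-tree nodes is defensible for the Callahan--Kosaraju construction, but the paper's induction on the rank of $|pq|$ is cleaner and WSPD-agnostic: $|pa|<|pq|$ and $|bq|<|pq|$ follow immediately from \cref{theo:wspd-properties}(i), with no appeal to the internal structure of the split tree. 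You would also need to spell out the case split on whether $p$ and/or $q$ is picked, since the argument when both are picked (apply \cref{theo:2+eps-orientation} directly) is qualitatively different from the recursive case.
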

\begin{proof}
  Let the constants in \cref{alg:wspd-oriented} be equal to $\varepsilon_1={\varepsilon}/{4}$ and $s={96}/{\varepsilon}$.
  Let $\oriented{G}$ be the graph returned by \cref{alg:wspd-oriented}. 
   
    For any two points $p,q\in P$, we prove that $|C_{\oriented{G}}(p,q)|\leq (2+\epsilon)\cdot|\Delta^*(p,q)|$. The proof is by induction on the rank of $|pq|$ in the sorted sequence of all $\binom{n}{2}$ pairwise distances.

    If $p,q$ is a closest pair, then the pair $\{\{p\},\{q\}\}$ is in the WSPD. Therefore, a $(1+\varepsilon_1)$-approximation of $\Delta^*(p,q)$ is added to the list $L$ of triangles, that define the edge set of $\oriented{G}$,  and, due to \cref{theo:2+eps-orientation} and $\varepsilon_1={\varepsilon}/{4}$, we have
     $  \odil_{\oriented{G}} (p,q) \leq 2+2\varepsilon_1 \leq 2+\varepsilon.$

Assume that $p,q$ is not a closest pair. 
    Let $\{A,B\}$ be the well-separated pair with $p\in A$ and $q\in B$. We distinguish cases based on whether $p$ and/or $q$ are picked by \cref{alg:wspd-oriented} while processing the pair $\{A,B\}$.

\subparagraph{Case 1:} Neither $p$ nor $q$ are picked by \cref{alg:wspd-oriented}.

This implies $|A|\geq 3$ and $|B|\geq 3$. Let $a\in A$ and $b\in B$ be picked points. We prove 
        \[|C_{\oriented{G}}(p,q)|\leq |C_{\oriented{G}}(p,a)|+|C_{\oriented{G}}(a,b)|+|C_{\oriented{G}}(b,q)|\leq (2+\varepsilon)\cdot|\Delta^*(p,q)|.\]
    A sketch of this proof is visualized in \cref{fig:2+eps-spanner-proof-sketch}.

\begin{figure}[ht]
    \centering
    \includegraphics{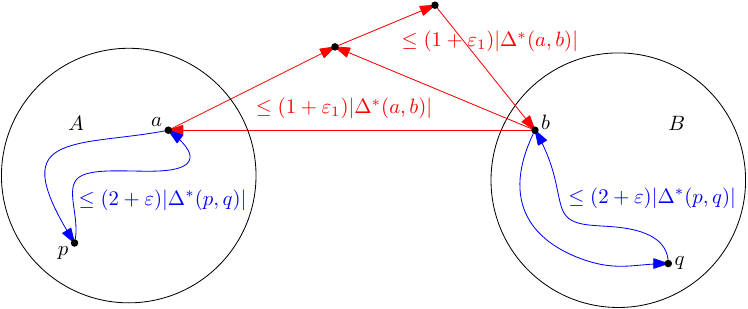}
    \caption{Visualization of Case 1: $|C_{\oriented{G}}(p,q)|\leq |C_{\oriented{G}}(p,a)|+|C_{\oriented{G}}(a,b)|+|C_{\oriented{G}}(b,q)|$ in the proof of \cref{theo:2+eps-spanner}. $C_{\oriented{G}}(a,b)$ is either a $(1+\varepsilon_1)$-approximation of $\Delta^*(a,b)$ or a union of two triangles (red), each of at most $(1+\varepsilon_1)|\Delta^*(a,b)|$. $|C_{\oriented{G}}(p,a)|$ and $|C_{\oriented{G}}(b,q)|$ (blue) are bounded inductively.}
    \label{fig:2+eps-spanner-proof-sketch}
\end{figure}
    
     Since $|pa|<|pq|$ and $|bq|<|pq|$, we apply induction to bound $|C_{\oriented{G}}(p,a)|$ and $|C_{\oriented{G}}(b,q)|$.
     Since $a$ and $b$ are picked points, \cref{theo:2+eps-orientation} applies, which gives us
     \begin{equation*}
         |C_{\oriented{G}}(p,q)|\leq (2+\varepsilon) |\Delta^*(p,a)| +(2+2\varepsilon_1) |\Delta^*(a,b)| +  (2+\varepsilon) |\Delta^*(b,q)|.
     \end{equation*}       

Since $|A|\geq 3$, $|\Delta^*(p,a)|$ is at most the perimeter of any triangle with vertices $p$, $a$, and one other point in $A$.  Each of the three edges of such a triangle can be bounded using \cref{theo:wspd-properties}. It holds that        
\begin{equation}\label{eq:bound-Delta-same-set-pair}
  |\Delta^*(p,a)| \leq 3\cdot 2/s \cdot  |pq| \leq  3/s \cdot |\Delta^*(p,q)|,
\end{equation}
where the last inequality $|\Delta^*(p,q)|\geq 2 |pq|$ follows from the triangle inequality.
Analogously, we bound $|\Delta^*(b,q)|$ by $3/s|\Delta^*(p,q)|$. 
By \cref{theo:delta-ab-WSPD}, we bound $|\Delta^*(a,b)|$ and achieve in total
    \begin{equation*}
         |C_{\oriented{G}}(p,q)|\leq 2 (2+\varepsilon)\cdot 3/s |\Delta^*(p,q)| +(2+2\varepsilon_1)\left(1+8/s\right) |\Delta^*(p,q)|.
     \end{equation*}

    Since $\varepsilon_1=\varepsilon/4$, $\varepsilon < 2$ and  $s= 96/\varepsilon$, we conclude that 
\begin{equation*}
     |C_{\oriented{G}}(p,q)|\leq \left(2+\frac{\varepsilon}{2}+\frac{28+10\varepsilon}{s}\right)|\Delta^*(p,q)| \leq \left(2+\frac{\varepsilon}{2}+\frac{48}{s}\right)|\Delta^*(p,q)| {=} (2+\varepsilon)|\Delta^*(p,q)|.
 \end{equation*}

\subparagraph{Case 2:} Either $p$ or $q$ is picked by \cref{alg:wspd-oriented}.

W.l.o.g., let $p$ and a point $b\in B$ be picked. Analogously to Case 1, it holds that 
\begin{align*}
    |C_{\oriented{G}}(p,q)|&\leq |C_{\oriented{G}}(p,b)|+|C_{\oriented{G}}(b,q)|
    \leq  (2+\varepsilon)|\Delta^*(p,q)|.
\end{align*}
   
\subparagraph{Case 3:} Both $p$ and $q$ are picked by \cref{alg:wspd-oriented}.

An approximation of $\Delta^*(p,q)$ is added to $L$ and then, due to \cref{theo:2+eps-orientation} and $\varepsilon_1=\frac{\varepsilon}{4}$, we have $\odil_{\oriented{G}}(p,q)\leq  2+2\varepsilon_1 \leq 2+\varepsilon$.\bigskip
 
     The initialisation is dominated by computing a WSPD in $\bigo(n \log n)$ time~\cite{CallahanKosaraju95WSPD}. 
     For each of the $\bigo(n)$ pairs, we pick at most four points and approximate the minimum triangle $\Delta^*(p,q)$ for at most six pairs $\{p,q\}$ of picked points. Therefore, computing the list $L$ of $\bigo(n)$ triples costs $\bigo(n \log n)$ time (see \cref{theo:approx-triangle-queries}) and orienting those costs $\bigo(n \log n)$ time (\cref{theo:2+eps-orientation}). Therefore, a $(2+\varepsilon)$-spanner can be computed in $\bigo(n \log n)$ time and $\bigo(n)$ space.
\end{proof}

\subsection{The Minimum-Perimeter Triangle}\label{sec:min-triangle}

Let $P$ be a set of $n$ points in $\mathbb{R}^d$, and let $p$ and $q$ be two distinct points in $P$. The minimum triangle $\Delta^*(p,q)$ containing $p$ and $q$ can be computed naively in $O(n)$ time. 

The following approach can be used to answer such a query in $O(n^{1-c})$ time, for some small constant $c>0$ that depends on the dimension $d$. Note that computing $\Delta^*(p,q)$ is equivalent to computing the smallest real number $\rho>0$, such that the ellipsoid
\[ \sqrt{\sum_{i=1}^d (p_i-x_i)^2} + \sqrt{\sum_{i=1}^d (q_i-x_i)^2} 
     \leq \rho  
\]
contains at least three points of $P$. 

For a fixed real number $\rho$, we can count the number of points of $P$ in the above ellipsoid: Using \emph{linearization}, see Agarwal and Matousek~\cite{AgarwalM94}, we convert $P$ to a point set $P'$ in a Euclidean space whose dimension depends on $d$. The problem then becomes that of counting the number of points of $P'$ in a query simplex. Chan~\cite{Chan12} has shown that such queries can be answered in $O(n^{1-c})$ time, for some small constant $c>0$.
This result, combined with \emph{parametric search}, see Megiddo~\cite{Megiddo83}, allows us to compute $\Delta^*(p,q)$  in $O(n^{1-c})$ time.\bigskip

For many applications, including our $(2+\varepsilon)$-spanner construction, an approximation of the minimum triangle is sufficient. The following lemma shows that a WSPD with respect to $s = 2/\varepsilon$ provides a $(1+\varepsilon)$-approximation. 

\begin{lemma}
\label{theo:delta-ab-WSPD}
 Let $A$ and $B$ be subsets of  a point set $P$ such that $\{A,B\}$ is an $s$-well-separated pair, and assume that $|B|\geq 2$. 
 \begin{enumerate}
     \item For any point $a\in A$ and any two points $b,c\in B$, we have $|\Delta_{abc}| \leq (1+2/s)  |\Delta^*(a,b)|$. 
     \item  Assume that $|A|\geq 2$. For any two points $a,a'\in A$ and any two points $b,b'\in B$, it holds $|\Delta^*(a,b)| \leq (1+8/s) |\Delta^*(a',b')|$.
  \end{enumerate}
\end{lemma}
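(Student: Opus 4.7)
The plan is to prove both parts using the WSPD properties together with the triangle inequality, with part 1 reducing to bounding the small edge $|bc|$ against $|ab|$ and part 2 requiring a case analysis depending on whether the optimal third vertex of $\Delta^*(a',b')$ coincides with $a$ or $b$.

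For part 1, the key observation is that \cref{theo:wspd-properties}~(i) forces $|bc|$ to be small compared to $|ab|$. I would first apply Property (i) to the pair $b, c \in B$, yielding $|bc| \leq (2/s)|ab|$, and combine this with the triangle inequality $|ac| \leq |ab| + |bc|$ to obtain $|\Delta_{abc}| = |ab| + |ac| + |bc| \leq 2|ab| + 2|bc| \leq (2 + 4/s)|ab|$. The proof then closes by noting that any triangle containing the edge $ab$ has perimeter at least $2|ab|$, so $|ab| \leq |\Delta^*(a,b)|/2$, and the factor of $2$ gets absorbed to give $|\Delta_{abc}| \leq (1 + 2/s)|\Delta^*(a,b)|$.

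For part 2, let $x \in P \setminus \{a',b'\}$ denote the third vertex realising $|\Delta^*(a',b')| = |a'b'| + |a'x| + |b'x|$. The idea is to use $x$ (or a substitute) as the third vertex of a triangle on $a,b$ and relate its edges to those of $\Delta^*(a',b')$. In the main case $x \notin \{a,b\}$, the triangle $\Delta_{abx}$ is non-degenerate. Here I would apply Property (ii) to obtain $|ab| \leq (1 + 4/s)|a'b'|$, and Property (i) via the triangle inequality to get $|ax| \leq |aa'| + |a'x| \leq (2/s)|a'b'| + |a'x|$ (using $|A| \geq 2$), with the symmetric bound for $|bx|$. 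Summing yields
\[
|\Delta_{abx}| \leq (1 + 8/s)|a'b'| + |a'x| + |b'x| \leq (1 + 8/s)|\Delta^*(a',b')|.
\]

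The main obstacle is the degenerate case $x \in \{a, b\}$, where $\Delta_{abx}$ is not a valid triangle. Suppose $x = a$; then $|\Delta^*(a',b')| = |a'b'| + |aa'| + |ab'|$, and I would instead use $a'$ as the third vertex of a triangle on $a, b$, which is valid because $a' \neq a$ (since $x = a$ implies $a \neq a'$) and $a' \notin B$. The triangle inequalities $|ab| \leq |ab'| + |b'b|$ and $|a'b| \leq |a'b'| + |b'b|$, together with $|b'b| \leq (2/s)|a'b'|$ from Property (i), give $|\Delta_{aba'}| \leq |\Delta^*(a',b')| + 2|b'b| \leq (1 + 2/s)|\Delta^*(a',b')|$, which is in fact tighter than the claimed bound. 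The case $x = b$ is symmetric, so all cases are dominated by the factor $(1 + 8/s)$.
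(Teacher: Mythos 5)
Your proof of part~1 is essentially the same as the paper's: bound $|ac| \leq |ab|+|bc|$, use WSPD-property~(i) to get $|bc| \leq (2/s)|ab|$, and finish with $|ab| \leq |\Delta^*(a,b)|/2$.

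For part~2 you take a genuinely different and more direct route. The paper applies part~1 twice as a two-step chain, swapping one endpoint at a time:
$|\Delta^*(a,b)| \leq (1+2/s)|\Delta^*(a',b)| \leq (1+2/s)^2|\Delta^*(a',b')| \leq (1+8/s)|\Delta^*(a',b')|$
(each step uses that the minimum triangle on a pair is no longer than the specific triangle produced by part~1). This is short and sidesteps any degeneracy, since if $a=a'$ or $b=b'$ the corresponding link is an equality. You instead construct an explicit witness triangle $\Delta_{abx}$ from the optimal third vertex $x$ of $\Delta^*(a',b')$ and bound each of its three edges directly via WSPD-properties~(i) and (ii) together with the triangle inequality, then fold the $(1+8/s)$ factor over $|a'b'|$ into the full perimeter. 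This works and even gives a sharper intermediate bound, but it forces the extra case analysis $x\in\{a,b\}$, which you handle correctly (and with a tighter $(1+2/s)$ factor). Both proofs are valid; the paper's chaining is leaner, while yours is self-contained within part~2 and makes the geometric content of the bound more explicit.
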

\begin{proof}
 By the triangle inequality, we have
\[ | \Delta^*(a,b) | \geq 2|ab| . 
\]
Again using the triangle inequality, we have

\[
 | \Delta_{abc} |  =  |ab| + |bc| + |ac| 
   \leq  |ab| + |bc| + |ab| + |bc| 
   =  2|ab| + 2|bc| . 
\]

Using \cref{theo:wspd-properties}, we conclude that
\[ | \Delta_{abc} | \leq (2+4/s) |ab| \leq 
   (1 + 2/s) | \Delta^*(a,b) | .
\]

Since the optimal triangle containing two points is always shorter than another triangle containing those points, we apply twice the previous statement to conclude that
\[
 |\Delta^*(a,b)|\leq \left(1+\frac{2}{s}\right) \cdot |\Delta^*(a',b)| \leq  \left(1+\frac{2}{s}\right)^2 \cdot |\Delta^*(a',b')| \leq \left(1+\frac{8}{s}\right) \cdot |\Delta^*(a',b')|. \qedhere
\]
\end{proof}

 Note that this works only if at least one subset of the well-separated pair contains at least two points. If both subsets have size one, we can use the following lemma to approximate the minimum triangle using approximate nearest neighbour queries.

\begin{lemma}\label{theo:approx-triangle-queries}
    Let $P$ be a set of $n$ points in $\mathbb{R}^d$ and let $0<\varepsilon_1<2$ be a real number. 
In $\bigo(n\log n)$ time, the set $P$ can be preprocessed into a data structure of size $\bigo(n)$, such that, given any two distinct query points $p$ and $q$ in $P$, a $(1+\varepsilon_1)$-approximation of the minimum triangle $\Delta^*(p,q)$ can be computed in $\bigo(\log n)$ time.
\end{lemma}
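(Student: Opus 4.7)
My plan is to use an $s$-WSPD of $P$ (with $s = \Theta(1/\varepsilon_1)$) augmented so that, for any query point $p$, the $O(\log n)$ WSPD pairs of the form $\{A,B\}$ with $p\in A$ can be enumerated along a root-to-leaf walk. The key observation driving the analysis is that the optimal third point $r^* = \arg\min_{x\in P\setminus\{p,q\}}(|px|+|qx|)$ lies, together with $p$, in a unique WSPD pair $\{A^*,B^*\}$; by \cref{theo:wspd-properties}, any representative of $B^*$ is quantitatively close to both $r^*$ and to $p$, so examining this pair alone suffices to obtain a witness triangle.

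For preprocessing, I would construct an $s$-WSPD with $s = 6/\varepsilon_1$ via \cref{theo:wspd-construction} in $O(n \log n)$ time and $O(n)$ space, and store, with each of the $O(n)$ pairs, two representative points from each side. Attaching each pair to its two incident nodes in the fair-split tree lets the walk from $p$'s leaf to the root visit exactly the pairs containing $p$, touching $O(s^d) = O(1)$ pairs per ancestor. The two stored representatives per side will ensure that a representative distinct from the forbidden point $q$ is always available.

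At query time, I walk from $p$'s leaf to the root; for each pair $\{A,B\}$ with $p \in A$ I retrieve a stored representative $b' \in B\setminus\{q\}$, evaluate $|\Delta_{pqb'}|$, and keep the minimum. For the analysis at the pair $\{A^*,B^*\}$ containing $r^*$, WSPD-Property (i) yields $|r^* b'| \leq (2/s)|pr^*|$ and Property (ii) yields $|pb'| \leq (1+4/s)|pr^*|$. Combining these with $|qb'| \leq |qr^*| + |r^*b'|$ gives
\[
|\Delta_{pqb'}| \leq |\Delta^*(p,q)| + (6/s)|pr^*| \leq (1+6/s)\,|\Delta^*(p,q)| = (1+\varepsilon_1)\,|\Delta^*(p,q)|,
\]
using $|pr^*|\leq|\Delta^*(p,q)|$. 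Since the algorithm considers this particular pair during the walk, the returned triangle is at least as good.

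The step I expect to require the most care is the combinatorial bookkeeping. Two subtleties: (a) ensuring that a representative distinct from $q$ is always available---the two-representatives-per-side trick suffices because at most one stored representative can coincide with $q$; and (b) arranging the split-tree augmentation so that the walk enumerates precisely the pairs with $p$ on one side during a single root-to-leaf pass in $O(\log n)$ total time, rather than rescanning the whole WSPD. Once these are in place, the $O(n)$ space bound follows from storing $O(1)$ data per WSPD pair and per tree node, and the query time from visiting $O(\log n)$ ancestors with $O(1)$ work each.
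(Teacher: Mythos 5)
Your approach is genuinely different from the paper's: the paper does not use the WSPD for this lemma at all. Instead it builds the approximate nearest-neighbour structure of Arya et al.\ (\cref{theo:nearest-neighbour-query}) and, at query time, first asks for an approximate nearest neighbour $r$ of $p$ in $P\setminus\{p,q\}$. If $|pr|$ is large compared to $|pq|$ it simply returns $\Delta_{pqr}$; otherwise it covers a hypercube of side $O(|pq|)$ around $p$ with $O(1)$ cells of side $O(\varepsilon_1|pq|)$, queries the ANN structure at each cell centre, and returns the best resulting triangle. The grid guarantees some queried cell centre is close to the optimal third point $x$, and the ANN guarantee then transfers to the returned point. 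Your WSPD-based derivation of the $(1+6/s)$ approximation factor via \cref{theo:wspd-properties} and the two-representatives-per-side trick are both sound as far as they go, so the quality of the returned triangle is not the issue.

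The gap is in the query-time claim. You assert that walking from $p$'s leaf to the root touches $O(s^d)=O(1)$ WSPD pairs per ancestor, hence $O(\log n)$ pairs in total. Neither half of this is justified. First, the number of WSPD pairs in which a single point participates is \emph{not} $O(\log n)$ in general: a single node of the split tree can be one side of $\Theta(n)$ pairs, and consequently a single point can appear in $\Theta(n)$ pairs. (This is exactly the phenomenon that makes the standard WSPD spanner have vertices of degree $\Theta(n)$, forcing the extra ``high-degree vertex'' repair step in bounded-degree spanner constructions.) Enumerating all pairs containing $p$ would therefore cost $\Theta(n)$ in the worst case, not $O(\log n)$. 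Second, the fair-split tree itself can have depth $\Theta(n)$ (e.g.\ exponentially spaced points on a line), so ``walking root-to-leaf'' is not an $O(\log n)$-time operation without switching to a balanced decomposition such as a BBD tree. Since you also do not explain how to locate specifically the pair $\{A^*,B^*\}$ separating $p$ from $r^*$ without scanning all pairs containing $p$, the $O(\log n)$ query bound does not follow, and the proposal as written does not establish the lemma.
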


\begin{proof}
The data structure is the approximate nearest neighbour structure of 
\cref{theo:nearest-neighbour-query}~\cite{AryaMountDS1998}. The query algorithm is presented in 
\cref{alg:approx-min-triangle}. 
Let the constants in this algorithm be equal to 
$\varepsilon_2 = \varepsilon_1/2$, $\alpha = 4/\varepsilon_1$, and 
$\varepsilon_3=\frac{2}{3 \sqrt{d}}\cdot \varepsilon_1$. 

Let $p$ and $q$ be two distinct points in $P$. 
We prove that \cref{alg:approx-min-triangle} returns a $(1+\varepsilon_1)$-approximation of $\Delta^*(p,q)$.

\begin{algorithm}[ht]
	\caption{$(1+\varepsilon_1)$-Approximation of the Minimum-Perimeter Triangle $\Delta^*(p,q)$}\label{alg:approx-min-triangle}
    \begin{algorithmic}
    		\REQUIRE Query points $p,q\in P$, positive reals $\varepsilon_1<2$, $\varepsilon_2$, $\varepsilon_3$, $\alpha>2$, data structure $M$ to approximate nearest neighbours
    		\ENSURE Triangle $\Delta$ such that $|\Delta|\leq (1+\varepsilon_1)|\Delta^*(p,q)|$
     \STATE Delete $p$ and $q$ from $M$
    \STATE $r=$ $(1+\varepsilon_2)$-approximate nearest neighbour of $p$ in $M$
    \IF{$|pr|>\alpha|pq|$}
    \RETURN $\Delta_{pqr}$
    \ELSE 
    \STATE Let $B$ be the hypercube with sides of length $3\alpha|pq|$ that is centred at $p$
        \STATE Divide $B$ into cells with sides of length of $\varepsilon_3|pq|$
        \FORALL{cell $C$ in $B$}
        \STATE $x_c=$ $(1+\varepsilon_2)$-approximate nearest neighbour of the centre $c$ of $C$
        \ENDFOR
        \RETURN $\Delta_{pqx_c}$ with $x_c=\underset{C\in B}{\argmin} |\Delta_{pqx_c}|$
    \ENDIF
    \STATE Re-Insert $p$ and $q$ in $M$
    \end{algorithmic}
\end{algorithm}

Throughout this proof, we denote the third point of $\Delta^*(p,q)$ by $x$. Let $r$ be the $(1+\varepsilon_2)$-approximate nearest neighbour of $p$ in $P \setminus \{p,q\}$ that is computed by the algorithm. 
By the triangle inequality, it holds

\begin{equation} 
\label{eq:approx-triangle-triangle-inequality} 
 |\Delta_{pqr}| = |pq|+|qr|+|pr| \leq 2|pq|+2|pr| . 
\end{equation} 

\subparagraph{Case 1:} $|pr| > \alpha |pq|$. 

Let $p^*$ be the exact nearest neighbour of $p$ in the set $P \setminus \{p,q\}$. Since $r$ is a $(1+\varepsilon_2)$-approximate nearest neighbour of $p$ in $P \setminus \{p,q\}$, and since $|\Delta^*(p,q)| \geq 2|px|$, we have 
\[ |pr| \leq (1+\varepsilon_2) |pp^*| \leq (1+\varepsilon_2) |px|  
    \leq (1+\varepsilon_2) |\Delta^*(p,q)|/2 .
\] 
Combining this inequality with \cref{eq:approx-triangle-triangle-inequality} and the assumption that 
$|pr| > \alpha |pq|$, we have 
\[ |\Delta_{pqr}| \leq (2+2/\alpha) |pr| \leq 
    (1+1/\alpha) (1+\varepsilon_2) |\Delta^*(p,q)| . 
\]
Since $\varepsilon_2 = \varepsilon_1/2$, $\alpha = 4/\varepsilon_1$ and $\varepsilon_1 < 2$, 
we have 
\[ |\Delta_{pqr}| \leq (1+\varepsilon_1/4) (1 + \varepsilon_1/2) 
            | \Delta^*(p,q)| = 
  ( 1+ 3 \varepsilon_1/4 + \varepsilon_1^2 / 8 ) | \Delta^*(p,q)|
   \leq (1 + \varepsilon_1)  | \Delta^*(p,q)|.
\] 

\begin{figure}[ht]
    \centering
    \includegraphics[page=1]{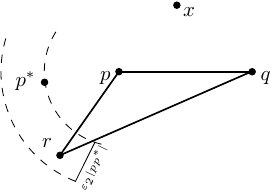}
    \caption{Visualization of Case 1 in the proof of \cref{theo:approx-triangle-queries}. Since $r$ is a $(1+\varepsilon_2)$-approximated nearest neighbour of $p$, the returned triangle $\Delta_{pqr}$ is a $(1+\varepsilon_1)$-approximation of $\Delta^*(p,q)=\Delta_{pqx}$.}
    \label{fig:1+eps-approx-triangle-case1}
\end{figure}

\subparagraph{Case 2:} $|pr| \leq \alpha |pq|$. 

Consider the hypercube $B$ with sides of length $3 \alpha |pq|$ 
that is centred at $p$. 

We first prove that the third point $x$ of $\Delta^*(p,q)$ is in $B$. 
Using \cref{eq:approx-triangle-triangle-inequality}, the assumption that $|pr| \leq \alpha |pq|$, 
and the fact that $\alpha \geq 2$, we have 
\[ |\Delta^*(p,q)| \leq |\Delta_{pqr}| \leq (2+2\alpha) |pq|
            \leq 3 \alpha |pq| .
\] 
Since $|\Delta^*(p,q)| \geq 2|px|$, it follows that 
$|px| \leq (3\alpha/2) |pq|$ and, therefore, $x \in B$. 

Let $C$ be the cell of $B$ that contains $x$, let $c$ be the centre of $C$, let $c^*$ be the exact nearest neighbour of $c$ in $P \setminus \{p,q\}$, and let $x_c$ be the $(1+\varepsilon_2)$-approximate nearest neighbour of $c$ that is computed 
by the algorithm. 
We first prove an upper bound on the distance $|x x_c|$: 
\[ |x x_c| \leq |xc| + |c x_c| \leq |xc| + (1+\varepsilon_2) |cc^*|  
  \leq |xc| + (1+\varepsilon_2) |cx| = (2+\varepsilon_2) |cx|  
    \leq 3 |cx| . 
\]
Since $x$  and $c$ are in the same $\varepsilon_3 |pq| \times \varepsilon_3 |pq|$-sized cell and $\varepsilon_3=\frac{2}{3 \sqrt{d}}\cdot \varepsilon_1$, it follows that 
\[ |x x_c| \leq 3 \cdot\sqrt{d}/2  \varepsilon_3 |pq| = 
       \varepsilon_1 |pq| \leq (\varepsilon_1/2) |\Delta^*(p,q)|.
\]
where the last inequality $|\Delta^*(p,q)|\geq 2 |pq|$ follows from the triangle inequality.

Let $\Delta_{pqx_{c'}}$ be the triangle that is returned by the 
algorithm. Then 
\begin{equation*}
     |\Delta_{pqx_{c'}}| \leq |\Delta_{pqx_c}|  \leq  |pq| + |qx| + |x x_c| + |x_c x| + |xp|  = | \Delta^*(p,q)| + 2 |x x_c |  \leq  (1+\varepsilon_1) | \Delta^*(p,q)| . 
\end{equation*}

\begin{figure}[ht]
    \centering
    \includegraphics[page=2]{figures/1+eps-approx-triangle.pdf}
    \caption{Visualization of Case 2 in the proof of \cref{theo:approx-triangle-queries}. The third point $x$ of $\Delta^*(p,q)$ lies in the cell centred at $c$. Since $\Delta_{pq{x_{c'}}}$ is returned, it holds  $|\Delta_{pq{x_{c'}}}|\leq |\Delta_{pq{x_c}}|$, where $x_{c'}$ and $x_c$ are $(1+\varepsilon_2)$-approximated nearest neighbours of $c'$ and $c$.}
    \label{fig:1+eps-approx-triangle-case2}
\end{figure}

The query algorithm performs two deletions, two insertions, and $1 + (3 \alpha / \varepsilon_3)^d = \bigo(1)$ approximate nearest neighbour queries in the data structure of \cref{theo:nearest-neighbour-query}~\cite{AryaMountDS1998}. Each such operation takes $\bigo(\log n)$ time. Therefore, the entire query algorithm takes $\bigo(\log n)$ time.
\end{proof}

\subsection{NP-Hardness}

Our $(2+\varepsilon)$-spanner construction involves computing undirected $(1+\varepsilon)$-approximations of minimum-perimeter triangles, which are subsequently oriented. A similar approach is used in~\cite{ESA23} where they prove, for convex point sets, orienting a greedy triangulation yields a plane oriented $O(1)$-spanner.
Both results could be improved by optimally orienting the underlying undirected graph. 
The authors in~\cite{ESA23} pose, as an open question, whether it is possible to compute, in polynomial time, an orientation of any given undirected geometric graph, that minimises the oriented dilation. 
We answer this question by showing NP-hardness.


\begin{theorem}\label{theo:np-hard-decision-orient}
Let $P$ be a finite set of points in a metric space and let $G$ be an undirected graph with vertex~$P$.
Given a real number $t'$, it is NP-hard to decide if there exists an orientation $\oriented{G}$ of $G$ with oriented dilation $\odil(\oriented{G})\leq t'$. 
This is even true for point sets in the Euclidean plane. 
\end{theorem}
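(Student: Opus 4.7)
The plan is to give a polynomial-time reduction from $3$-SAT (or NAE-$3$-SAT). Given a Boolean formula $\phi$, I would construct a point set $P \subset \R^2$, an undirected graph $G=(P,E)$, and a threshold $t'>1$ such that $G$ admits an orientation with oriented dilation at most $t'$ if and only if $\phi$ is satisfiable.

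The driving structural observation is the following: if a pair $p,q \in P$ has its minimum-perimeter triangle $\Delta^*(p,q) = \Delta_{pqr}$ entirely contained in $G$, then an orientation $\oriented{G}$ of $G$ achieves $|C_{\oriented{G}}(p,q)| = |\Delta^*(p,q)|$ exactly when $\Delta_{pqr}$ is oriented consistently as a directed $3$-cycle; any other orientation of its three edges leaves either $p$ or $q$ as a local source or sink inside the triangle, so the shortest closed walk through $(p,q)$ must use additional, strictly longer detours. Choosing $t'$ sufficiently close to $1$ therefore forces every such ``optimal triangle'' in $G$ to be oriented as a directed $3$-cycle, reducing the orientation problem to a combinatorial constraint-satisfaction problem on the hypergraph of optimal triangles.

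Within this framework I would design a \emph{variable gadget} consisting of a few points whose optimal triangles share an edge whose two possible directions correspond to the two truth values of a Boolean variable, and a \emph{clause gadget} that couples three variable gadgets through further shared optimal triangles so that the conjunction of the induced triangle-consistency constraints is satisfiable if and only if at least one literal of the clause is true. The gadgets are placed in $\R^2$ on well-separated locations and at appropriately chosen scales so that each intra-gadget pair $(p,q)$ has its minimum-perimeter triangle inside the same gadget, and so that inconsistent local orientations cannot be compensated by walks leaving the gadget.

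The main obstacle is the geometric certification: one must verify that (i) for every relevant pair inside a gadget $\Delta^*(p,q)$ is the intended triangle rather than a spurious one using points of another gadget, (ii) no composite closed walk through several gadgets can ``rescue'' an invalid local orientation while keeping $\odil(\oriented{G}) \leq t'$, and (iii) the whole construction remains of polynomial size in $|\phi|$. These requirements should be achievable by a careful scale separation between gadgets, by placing each gadget within a bounding disc whose radius is negligible compared to inter-gadget distances, together with a choice of $t'$ tight enough to force the intended triangle orientations but loose enough to admit a globally valid orientation whenever $\phi$ is satisfiable.
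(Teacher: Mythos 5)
Your high-level strategy is the same as the paper's: a reduction from a SAT variant to the orientation problem, with variable gadgets encoding truth values in edge orientations and clause gadgets coupling them, placed in $\R^2$. The central structural observation (a consistently oriented minimum triangle realizes $|C_{\oriented{G}}(p,q)|=|\Delta^*(p,q)|$, and a threshold $t'$ close to $1$ can penalize deviations) is also the engine of the paper's construction, where truth values are stored in the orientation of a short vertical edge of an ``oriented point'' and propagated along wires. However, there are three concrete gaps in your sketch that the paper handles explicitly and that would derail your argument as stated.

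First, you reduce from $3$-SAT (or NAE-$3$-SAT), but for a construction realized by actual points in the Euclidean plane you need the incidence structure of the formula to be drawable without crossings; otherwise the wires/shared triangles that couple a variable gadget to its (possibly many) clause gadgets must cross, and a crossing is a geometric event that creates unintended short triangles and spurious closed walks. The paper reduces from \emph{planar} $3$-SAT precisely so that the incidence graph can be embedded on a polynomial-size grid.

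Second, you only argue about pairs $(p,q)$ that carry gadget semantics. But $\odil(\oriented{G})$ is a maximum over \emph{all} pairs, including pairs in distinct gadgets and pairs inside a gadget that are not part of any intended triangle. With $t'$ close to $1$, every such pair needs a short oriented closed walk, and ``well-separated gadgets at different scales'' makes this worse, not better: a closed walk between far-apart gadgets must return, and if the graph $G$ only links gadgets along a tree of wires, those pairs have huge dilation. The paper adds an explicit $K_{2,2}$ gadget between every pair of non-adjacent oriented points, and proves both that this bounds the dilation of all ``easy'' pairs by $1+\varepsilon$ and that it cannot be used to shortcut a wire. Your proposal omits this entirely, which means the forward direction of the reduction (satisfiable $\Rightarrow$ low-dilation orientation exists) fails even before the backward direction becomes an issue.

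Third, ``triangle-consistency constraints'' between triangles sharing an edge naturally give equality/parity-type constraints, not disjunctions; you do not explain how a clause gadget built solely from shared optimal triangles would implement ``at least one of three literals is true.'' The paper achieves OR through a dedicated geometric device: the clause leaves are the foci of an ellipse and the three incoming wire endpoints $L, R, B$ sit just inside it, so that a single correctly oriented endpoint gives a short closed walk between the leaves, while if all three are misoriented the walk is forced either outside the ellipse (via a satellite point) or through two of $\{L,B,R\}$, both of which push the dilation above $t'=1.043$. Without such a mechanism, the constraint hypergraph you describe does not encode $3$-SAT.
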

\begin{proof}
   We reduce from the NP-complete problem planar $3$-SAT~\cite{Lichtenstein82}. We start with a planar Boolean formula $\varphi$ in conjunctive normal form with an incidence graph $G_\varphi$ as illustrated in \cref{fig:example-planar-3SAT}.
   This graph can be embedded on a polynomial-size grid whose cells have side length one~\cite{Godau1999,Lichtenstein82}. 
     We give a construction for a graph $G$ based on $G_\varphi$ such that there is an orientation $\overrightarrow{G}$ of $G$ with dilation $\odil(\overrightarrow{G})\leq t'$ with $t' := 1.043$ if and only if $\varphi$ is satisfiable. 

\begin{figure}[ht]
    \centering
    \includegraphics{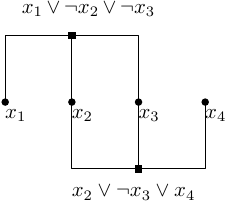}
    \caption{Example: Incidence graph of a planar $3$-SAT formula embedded on a square grid}
    \label{fig:example-planar-3SAT}
\end{figure}

In the following, every point $p=(x,y)$ on the grid is replaced by a so-called \emph{oriented point}, which is a pair of points $P=\{t(p), b(p)\}$ with \emph{top} $t(p)=(x, y+\frac{\varepsilon}{2})$ and \emph{bottom} $b(p)=(x, y-\frac{\varepsilon}{2})$ with $\varepsilon:=10^{-6}$. 

Then, $\odil(P,P')$ is defined as the maximum dilation between one of the two points of $P$ and one of the points of $P'$.
We denote the minimum triangle containing one of $\{t(p), b(p)\}$ and one of $\{t(p'), b(p')\}$ by $\Delta^*(P,P')$. In this proof, the following bounds is used repeatedly to bound $\odil(P,P')$. It holds that
\begin{equation*}
    2|pp'|\leq |\Delta^*(P,P')| \leq 2|pp'|+2\varepsilon.
\end{equation*}


We add an edge between $t(p)$ and $b(p)$, its orientation encodes whether this points represent ``true'' or ``false''. W.l.o.g. we assume that an oriented edge from $b(p)$ to $t(p)$, thus an \emph{upwards} edge, represents ``true'' and a \emph{downwards} edge represents ``false''. When this is not the case, we can achieve this by flipping the orientation of all edges.

Edges in the plane embedding of our formula graph $G_\varphi$ is replaced by \emph{wire} gadgets. First, we add (a polynomial number of) grid points on the edge such that all edges have length $1$. Then, we create a wire as in \cref{fig:wire-positive}. Note that wires propagate the orientation of oriented points -- if two \emph{direct neighbours along a wire} have different orientations, their dilation would be $2-\varepsilon> 1.043$, since their shortest closed walk needs to go through an additional point. If they have the same orientation, their dilation is smaller than $1+2\varepsilon$. To switch a signal (for a negated variable in a clause), we start a wire as in \cref{fig:wire-negation}.

\begin{figure}[ht]
    \centering
\begin{minipage}{\linewidth}
    \centering
   \includegraphics{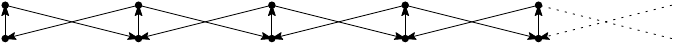}
    \caption{A wire where oriented points are oriented upwards}
    \label{fig:wire-positive}
\end{minipage}\hfill
 \begin{minipage}{\linewidth}
    \centering
    \includegraphics{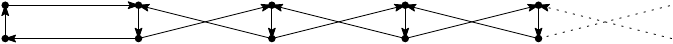}
    \caption{A wire where the orientation is switched to negate the signal}
    \label{fig:wire-negation}
\end{minipage}
\end{figure}
    
    To ensure that all clause gadgets encode the same orientation of oriented points as ``true'', we add a \emph{tree of knowledge}.
    This is a tree with vertices on a $1\times 1$-grid shifted by $(0.5, 0.5)$ relative to the grid of $G_\varphi$. Again, we use wires as edges. The tree has two leaves per clause (see \cref{fig:tree-of-knowledge}).  W.l.o.g. we assume that all oriented points of the tree of knowledge are oriented upwards (thus ``true''). 

    \begin{figure}[ht]
        \centering
        \includegraphics{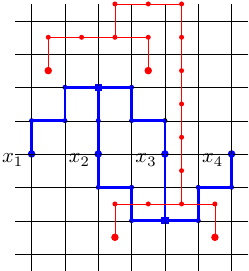}
        \caption{$G_\varphi$ (blue) and its underlying grid together with a tree of knowledge (red)}
        \label{fig:tree-of-knowledge}
    \end{figure}

   All oriented points, which are not direct neighbours along a wire, are linked by a $K_{2,2}$, that is by the four possible edges between top-top, bottom-bottom, top-bottom, bottom-top. This ensures $\odil(P,P')\leq 1+\varepsilon$ for not direct neighbours (compare to \cref{fig:k4-orientation}).

 \begin{figure}[ht]
    \centering
    \hfill\includegraphics[page=1]{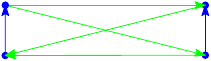} \hfill \includegraphics[page=2]{figures/k22-colored.pdf}\hfill~
         \caption{$K_{2,2}$ for oriented points with same orientation (left) and different orientation (right)}
    \label{fig:k4-orientation}
    \end{figure}
    
      \begin{figure}[ht]
        \centering
        \includegraphics{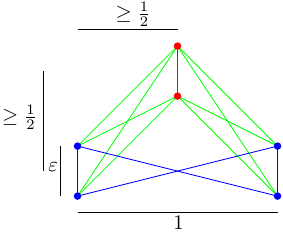}
        \caption{A wire (blue) can not be shortcut by $K_{2,2}$ gadgets  (green)}
        \label{fig:short-cut-wire-by-k4}
    \end{figure}    

    For an oriented point $P$, the shortest closed walk containing $t(p)$ and $b(p)$ is though the closest point $p'$ of $p$. Therefore, it holds that $\odil(t(p),b(p))\leq 1+2\varepsilon$, both if $p$ and $p'$ are direct neighbours (shortest closed walk is through a wire) and not (through a $K_{2,2}$). 

Adding these $K_{2,2}$ gadgets does not affect the functionality of wire gadgets: 
Let $p$ and $p'$ be direct neighbours and $p^*$ a third point.
Since $p^*$ has at least distance $(0.5, 0.5)$ to $p$ and $p'$, a walk from $p$ to $p'$ visiting $p^*$ has at least length $\sqrt{2}>1+2\varepsilon$. Therefore, a wire can not be shortcut by $K_{2,2}$ gadgets (compare to \cref{fig:short-cut-wire-by-k4}).


   For every clause, its two leaves in the tree of knowledge are not linked by a $K_{2,2,}$, but rather by a \emph{clause gadget}. \cref{fig:clause-overview} shows the two leaves of the tree at a clause, and $G_\varphi$ at the clause. We can assume that $G_\varphi$ is embedded as shown, in particular leaving the area directly above the clause empty.
    We show how such a clause gadget looks like in \cref{fig:clause-overview-ellipse}, more detailed in \cref{fig:detailed-clause-gadget}.

    The oriented points $L$ (left), $R$ (right) and $B$ (bottom) are the ends of the variable wires of a clause. They are placed such that they lie just inside an ellipse with the leaves $l_1$ and $l_2$ as foci, and without any other points in the ellipse (compare to \cref{fig:clause-overview,fig:clause-overview-ellipse}). 
    As proven later, the gadget as shown in \cref{fig:detailed-clause-gadget} guarantees that to obtain $\odil(L_1,L_2)\leq t'$ one of $\{L,R,B\}$ lies on a shortest closed walk containing $L_1$ and $L_2$ and the orientation of that point has to be the same as of $L_1$ and $L_2$  (thus, the literal is ``true'').
    
    For each of $\{L,R,B\}$ there exists a \emph{satellite} point, which is an oriented point on the variable wire, which is close but outside the ellipse. Its purpose is to ensure that the oriented dilation of $L_1$ (and likewise $L_2$) with $L$, $B$ and $R$ is below $t'$, even if the orientation of that point is different as of $L_1$ and $L_2$ (thus, the corresponding literal does not satisfy the clause). 
   As described before, a $K_{2,2}$ exists between all non-neighbouring points, except the leaves $L_1$ and $L_2$. 

 \begin{figure}[ht]
    \begin{minipage}[b]{.45\linewidth}
         \centering
        \includegraphics{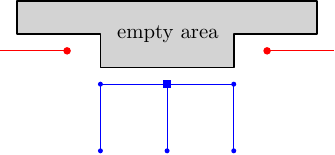}
        \caption{Embedded clause}
        \label{fig:clause-overview}
    \end{minipage}\hfill
    \begin{minipage}[b]{.45\linewidth}
         \centering
        \includegraphics{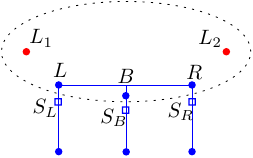}
        \caption{Clause gadget}
        \label{fig:clause-overview-ellipse}
    \end{minipage}
\end{figure}

    \begin{figure}[ht]
        \centering
        \includegraphics{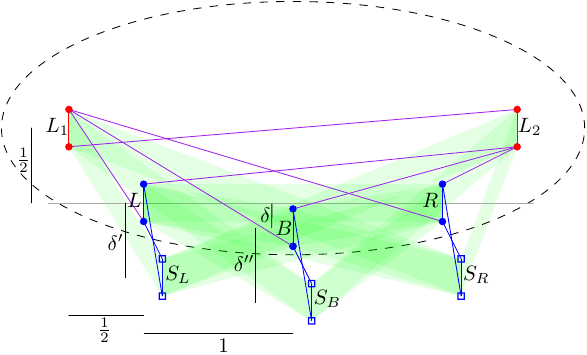}
        \caption{Detailed clause gadget. $K_{2,2}$ gadgets are indicated by green parallelograms.}
        \label{fig:detailed-clause-gadget}
    \end{figure}

  In the following, we set values for $\delta, \delta'$ and $\delta''$ (compare to \cref{fig:detailed-clause-gadget}) and thus place the points $L$, $B$ and $R$ exactly.

    First, $\delta'$ has to be set such that the satellite is not in the ellipse, but the dilation of the leaves and the left and right wire ends is bounded by $t'$, even if the shortest closed walk is through the $K_{2,2}$ between the leaf and the satellite. Thus, we set $\delta':=0.0152$ and achieve
    \begin{align*}
        \odil(L_1,L)&\leq \frac{2|l_1s_l|+2\delta'+8\varepsilon}{2|l_1l|}
        \leq \frac{|l_1l|+|ls_l|+\delta'+2\varepsilon}{|l_1l|}\\
        &\leq 1+ \frac{2\delta'+4\varepsilon}{\sqrt{\left(\frac{1}{2}\right)^2+\left(\frac{1}{2}\right)^2}}
         \leq 1+2\sqrt{2}\delta'+ 6\varepsilon 
        \leq 1.043.
    \end{align*}

    Analogously, we can upper bound $\odil(L_2,L)$, $\odil(L_1,R)$ and $\odil(L_2,R)$ by $1.043$.

       As can be seen in \cref{fig:clause-overview-ellipse,fig:detailed-clause-gadget}, the oriented point $B$ at the grid point $b$ has to be shifted by $(0,-\delta)$ from the original grid such that it is only just contained in the ellipse. Thus, $\delta$ has to be set such that $|l_1b|+ |bl_2| \leq |l_1l|+ |ll_2|$. This is true for $\delta:=0.1335$: 
       \begin{equation*}
           |l_1b|+ |bl_2|= 2\sqrt{\left(\frac{1}{2}+\delta\right)^2+\left(\frac{3}{2}\right)^2} \leq  |l_1l|+ |ll_2|= \sqrt{\left(\frac{1}{2}\right)^2+\left(\frac{1}{2}\right)^2}+\sqrt{\left(\frac{1}{2}\right)^2+\left(\frac{5}{2}\right)^2}.
       \end{equation*}

    Again, the wire from below to $B$ is disrupted by a  $\delta''$ close satellite which is not contained in the ellipse, to ensure that the dilation of the leaves and $B$ is bounded by $t'$, even if the shortest closed walk is through the $K_{2,2}$ to the satellite.
     Note that $\delta''$ is larger $\delta'$, as $B$ does not lay exactly on the ellipse. 
 Thus, we set $\delta'':= 0.0304$ and achieve
\begin{align*}
    \odil(L_1,B)&\leq \frac{2|l_1s_b|+2\delta''+8\varepsilon}{2|l_1b|}\leq  \frac{|l_1b|+|bs_b|+\delta''+4\varepsilon}{|l_1b|}= 1+\frac{2\delta''+4\varepsilon}{\sqrt{\left(\frac{1}{2}+\delta\right)^2+\left(\frac{3}{2}\right)^2}}\\
    &\leq 1+\frac{2\delta''+4\varepsilon}{\sqrt{2+\delta}}\leq 1+\frac{2\delta''+4\varepsilon}{\sqrt{2}}\leq 1+\sqrt{2}\delta''+3\varepsilon
    \leq 1.043
\end{align*}
 
 Since $\odil(L_1,B)=\odil(L_2,B)$, it also holds that $\odil(L_2,B)\leq 1.043$.

The clause gadget links $L_1$ and $L_2$ in a certain way: We add edges between the two leaves $L_1,L_2$ and wire end points $L,B,R$ such that for every incoming wire there is a closed walk through the two leaves and the wire end point (see \cref{fig:detailed-clause-gadget}). If $L$ has the same orientation as the tree of knowledge (i.e.\ the literal is true), the dilation is

\begin{equation*}
    \odil(L_1,L_2)\leq \frac{|l_1l|+|ll_2|+|l_1l_2|+6\varepsilon}{2|l_1l_2|}
    =\frac{{\sqrt{\left(\frac{1}{2}\right)^2+\left(\frac{1}{2}\right)^2}}+{\sqrt{\left(\frac{1}{2}\right)^2+\left(\frac{5}{2}\right)^2}}+3}{6}+\varepsilon
    \leq 1.043.
\end{equation*}

   The same holds if $R$ has the same orientation as the tree of knowledge.

       If $B$ has the same orientations as the tree of knowledge, the dilation is
       \begin{equation*}
           \odil(L_1,L_2)  \leq \frac{|l_1b|+|bl_2|+|l_1l_2|+6\varepsilon}{2|l_1l_2|}= \frac{2\sqrt{\left(\frac{1}{2}+\delta\right)^2+\left(\frac{3}{2}\right)^2}+3}{6}+\varepsilon
           \leq 1.043.
       \end{equation*}

     If $L$, $B$ and $R$ have a different orientation than the tree of knowledge (i.e.\ every literal is false), by triangle inequality there are the following possible shortest closed walks through $L_1$ and $L_2$ which lead to three lower bounds for $\odil(L_1,L_2)$:
    \begin{itemize}
        \item A closed walk is from $L_1$ to the satellite of $L$ (or $R$ symmetric) to $L_2$ and back to $L_1$:
        \begin{align*}
             \odil(L_1,L_2)&\geq  \frac{|l_1s_l|+|s_ll_2|+|l_1l_2|}{2 |l_1l_2|+2\varepsilon}\\
             &\geq \frac{\sqrt{(\frac{1}{2}+\delta')^2+\left(\frac{1}{2}\right)^2}+\sqrt{\left(\frac{1}{2}+\delta'\right)^2+\left(\frac{5}{2}\right)^2}+3}{6}\cdot (1-\varepsilon)
             > 1.043.
        \end{align*}

        \item A closed walk is from $L_1$ to the satellite of $B$ to $L_2$ and back to $L_1$:
        \begin{equation*}
            \odil(L_1,L_2)\geq  \frac{|l_1s_b|+|s_bl_2|+|l_1l_2|}{2 |l_1l_2|+2\varepsilon}
            \geq\frac{2\sqrt{\left(\frac{1}{2}+\delta+\delta''\right)^2+\left(\frac{3}{2}\right)^2}+3}{6}\cdot (1-\varepsilon)%
            > 1.043.
        \end{equation*}
        \item A closed walk is from $L_1$ to $L$ to $B$ to $L_2$ and back to $L_1$ (or symmetric through $B$ to $R$):
        \begin{align*}
            \odil(L_1,L_2)&\geq  \frac{|l_1l|+|lb|+|bl_2|+|l_1l_2|}{2 |l_1l_2|+2\varepsilon}\\
            &\geq \frac{\sqrt{\left(\frac{1}{2}\right)^2+\left(\frac{1}{2}\right)^2}+\sqrt{\delta^2+1}+\sqrt{\left(\frac{1}{2}+\delta\right)^2+\left(\frac{3}{2}\right)^2}+3}{6}\cdot (1-\varepsilon) 
            > 1.043.
        \end{align*}
    \end{itemize}

\noindent
Thus, we obtain the following conditions: 
    \begin{itemize}
    \item If one of the oriented points $L$, $B$, $R$ is oriented upwards, the dilation between $L_1$ and $L_2$ is smaller than $t' := 1.043$.
    \item If none of the oriented points $L$, $B$, $R$ is oriented upwards, the shortest closed walk containing $L_1$ and $L_2$ 
either leaves the ellipse or takes at least to points from $\{L,B,R\}$ and thus their dilation is greater than $t'$.   
\end{itemize}

Therefore, formula $\varphi$ is satisfiable if and only if there exists an orientation of our constructed graph with dilation at most $1.043$.
    \end{proof}

Following the construction in the proof of \cref{theo:np-hard-decision-orient}, \cref{fig:example-details} illustrates the graph for the formula $\varphi= (x_1\lor \neg x_2 \lor \neq x_3)\land (x_2\lor \neg x_3 \lor x_4)$ (see also \cref{fig:example-planar-3SAT,fig:tree-of-knowledge}).

\begin{figure}[ht]
    \centering
        \includegraphics{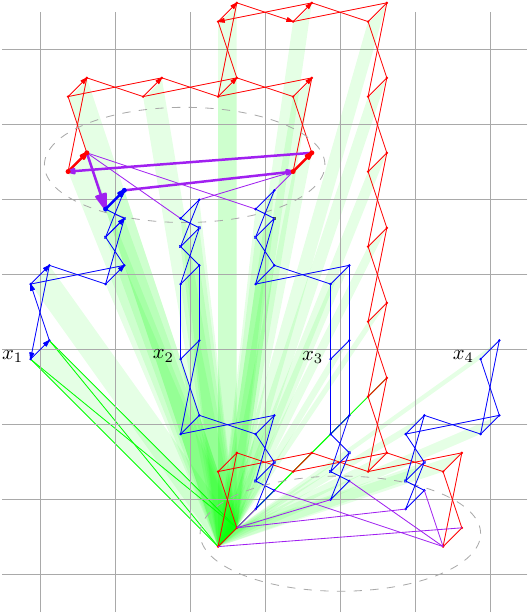}
    \caption{Graph constructed for $\varphi= (x_1\lor \neg x_2 \lor \neg x_3)\land (x_2\lor \neg x_3 \lor x_4)$. 
    For visibility, oriented points are placed diagonally instead of vertically. Only for one point, all $K_{2,2}$-gadgets are indicated by green parallelograms.
    If the oriented point at the end of the wire from $x_1$ (blue) is -- as indicated -- oriented the same way as the tree of knowledge (red), this corresponds to setting it to true, resulting in an oriented closed walk in the clause gadget (purple) that gives a dilation smaller than 1.043.
    }
    \label{fig:example-details}
\end{figure}

\section{Oriented Dilation}\label{sec:odil}

Computing the dilation of a given graph is related to the all-pairs shortest paths problem. There is a well-known cubic-time algorithm for APSP by Floyd and Warshall~\cite{Floyd62.APSP,Warshall62.APSP}.
By using their algorithm (and computing the minimum triangles naively), the oriented dilation of a given graph with $n$ points can be computed in $\bigo(n^3)$ time. 

We prove APSP-hardness for computing the oriented dilation. Since the APSP-problem is believed to need truly cubic time~\cite{Reddy16.APSP}, it is unlikely that computing the oriented dilation for a given graph is possible in subcubic time. 

Our hardness proof is a subcubic reduction to $\textsc{MinimumTriangle}$, which is the problem of finding a minimum weight cycle in an undirected graph with weights in $[7M, 9M]$, where the minimum weight cycle is a triangle. For two computational problems $A$ and $B$, a \emph{subcubic reduction} is a reduction from $A$ to $B$ where any subcubic time algorithm for $A$ would imply a subcubic time algorithm for $B$ (for a more thorough definition see~\cite{WilliamsW18}). 
 We use the notation $A\leq_3 B$ to denote the existence of a subcubic reduction from $A$ to $B$. 
 
Williams and Williams~\cite{WilliamsW18} prove the APSP-hardness for the problem of finding a minimum weight cycle in an undirected weighted graph. Their proof also implies the following hardness result
(see \cref{sec:williamsproof} for details).

\begin{restatable}[$\textsc{APSP} \leq_3 \textsc{MinimumTriangle}$]{theorem}{minimumTriangle}
\label{theo:minimumtriangle-APSP-hard}
Let $G$ be an undirected graph with weights in $[7M, 9M]$ whose minimum weight cycle is a triangle.
Then, it is APSP-hard to find a minimum weight triangle in $G$. 
\end{restatable}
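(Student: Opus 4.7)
The plan is to revisit the Williams--Williams reduction~\cite{WilliamsW18} from $\textsc{APSP}$ to minimum weight cycle and verify that the graph they construct already satisfies the two extra structural properties in the statement: every edge weight lies in $[7M,9M]$, and no cycle of length $\geq 4$ can beat a triangle.

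First, I would recall the standard intermediate problem. Since $\textsc{APSP}$ is subcubic-equivalent to $(\min,+)$-matrix multiplication, it suffices to give a subcubic reduction from $(\min,+)$-product to $\textsc{MinimumTriangle}$. Given three $n\times n$ matrices $A,B,C$ with entries in $\{-M,\dots,M\}$, build a tripartite graph on vertex set $I\cup J\cup K$, $|I|=|J|=|K|=n$, with edge $(i,k)$ of weight $A[i,k]$, edge $(k,j)$ of weight $B[k,j]$, and edge $(i,j)$ of weight $C[i,j]$. Any triangle has exactly one vertex in each part, and its weight equals $A[i,k]+B[k,j]+C[i,j]$; finding a minimum weight triangle therefore recovers one entry of $\min_k(A[i,k]+B[k,j])+C[i,j]$, from which the full $(\min,+)$ product follows in $O(n^2)$ extra time.

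Second, to force weights into $[7M,9M]$, I would translate every edge weight by adding $8M$. Since every triangle has exactly three edges, this shifts every triangle weight by a fixed constant $24M$, so the identity of the minimum weight triangle is preserved. Now every edge weight lies in $[7M,9M]$, and every triangle has weight in $[21M,27M]$. For any cycle of length $k\geq 4$, the weight is at least $7kM\geq 28M>27M$, so the minimum weight cycle in the constructed graph is necessarily a triangle. This gives exactly the input form required by $\textsc{MinimumTriangle}$.

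Finally, the reduction takes $O(n^2)$ time and produces a graph on $3n$ vertices, so a $T(n)$-time algorithm for $\textsc{MinimumTriangle}$ on graphs with weights in $[7M,9M]$ whose minimum cycle is a triangle yields an $O(n^2+T(3n))$-time algorithm for $(\min,+)$-product, and hence a subcubic algorithm for $\textsc{APSP}$ whenever $T(n)$ is subcubic. The main obstacle is essentially bookkeeping: checking that the original Williams--Williams pipeline (which is phrased in terms of general min weight cycles) goes through without loss when the weight shift is tightened to exactly $8M$ and the tripartite structure is retained, so that no shorter non-triangle cycle is accidentally introduced. This verification is what the appendix (\cref{sec:williamsproof}) carries out.
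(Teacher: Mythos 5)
Your core insight matches the paper's: the Williams--Williams weight shift $w'(e) = w(e) + 8M$ pushes every edge weight into $[7M,9M]$, so every triangle has weight at most $27M$ while every cycle with $\geq 4$ edges has weight at least $28M$; hence the minimum cycle is automatically a triangle and the $\textsc{MinimumCycle}$ reduction already targets $\textsc{MinimumTriangle}$. That is exactly the observation the paper makes. However, your framing of the surrounding pipeline has a genuine gap. The paper does not re-derive the reduction from $(\min,+)$-product; it simply cites Williams--Williams for $\textsc{APSP} \leq_3 \textsc{NegativeTriangle}$ and re-examines only the final step $\textsc{NegativeTriangle} \leq_3 \textsc{MinimumCycle}$, noting that the constructed graph always has a triangle as its minimum cycle. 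You instead attempt a direct reduction from $(\min,+)$-product to $\textsc{MinimumTriangle}$ via the tripartite graph, and assert that a single minimum-triangle query plus $O(n^2)$ extra work recovers the full $(\min,+)$-product. That is not correct as stated: one minimum-triangle query reports only the globally minimizing pair $(i,j)$, not all $n^2$ entries. Recovering the full product from a triangle oracle is precisely the nontrivial content of the Williams--Williams machinery (reduction to all-pairs negative triangle via binary search, then to a single-triangle oracle via a recursive block decomposition), and that machinery is what justifies the ``$\leq_3$''. You flag this as ``bookkeeping,'' but it is the heart of the subcubic reduction; the $+8M$ shift is the easy part. Your argument would be complete if you replaced the ``$O(n^2)$ extra time'' claim with an explicit appeal to the Williams--Williams chain $\textsc{APSP} \leq_3 \textsc{NegativeTriangle}$, reducing your contribution (like the paper's) to verifying that the last link produces instances where the minimum cycle is a triangle with weights in $[7M,9M]$.
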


By reduction to $\textsc{MinimumTriangle}$, we prove APSP-hardness for $\textsc{ODIL}$, which is the problem of computing the oriented dilation of a given oriented graph in a metric space.

\begin{theorem}[$\textsc{APSP} \leq_3 \textsc{Odil}$] 
Let $\oriented{G}$ be an oriented metric graph. It is APSP-hard to compute the oriented dilation of $\oriented{G}$. 
\end{theorem}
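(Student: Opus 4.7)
The plan is to reduce $\textsc{MinimumTriangle}$ to $\textsc{Odil}$; combined with \cref{theo:minimumtriangle-APSP-hard}, this yields the claim. Given an instance $G = (V, E, w)$ of $\textsc{MinimumTriangle}$ with $w : E \to [7M, 9M]$, I would construct the $\textsc{Odil}$ instance as follows. Define a metric $d$ on $V$ by setting $d(u,v) = w(u,v)$ whenever $\{u,v\} \in E$, and $d(u,v) = 14M$ otherwise; since every pairwise distance lies in $[7M, 14M]$, the triangle inequality reduces in the extreme case to $7M + 7M \geq 14M$, so $d$ is a valid metric. Fix an arbitrary ordering $v_1, \ldots, v_n$ of $V$ and let $\oriented{H}$ be the directed Hamilton cycle $v_1 \to v_2 \to \cdots \to v_n \to v_1$ on $V$, with each edge weighted by $d$.

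Next, I would show that the minimum-perimeter triangle in the metric space $(V, d)$ equals $T^*$, the minimum-weight triangle of $G$. By case analysis on how many of the three pairs in a triple $\{u,v,x\}$ are edges of $G$: if all three are $G$-edges the perimeter lies in $[21M, 27M]$; if exactly two are $G$-edges the perimeter is at least $7M + 7M + 14M = 28M$; and if at most one is a $G$-edge the perimeter is at least $35M$. Thus $\min_{u \neq v} |\Delta^*(u,v)|$ is attained by a minimum-weight triangle of $G$ and equals $T^*$.

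I would then argue that $|C_{\oriented{H}}(u,v)| = H$ for every pair $u \neq v$, where $H$ is the total length of the Hamilton cycle. Since $\oriented{H}$ consists only of the $n$ cycle edges, every closed walk is a (possibly repeated) full traversal of the cycle, and the shortest such walk through any two distinct vertices has length exactly $H$. Consequently
\[
\odil(\oriented{H}) \;=\; \max_{u \neq v} \frac{H}{|\Delta^*(u,v)|} \;=\; \frac{H}{T^*},
\]
so $T^*$ is recovered in $\bigo(1)$ time as $H / \odil(\oriented{H})$. Building $d$, $\oriented{H}$, and $H$ takes $\bigo(n^2)$ time, so any truly subcubic algorithm for $\textsc{Odil}$ yields one for $\textsc{MinimumTriangle}$ and, via \cref{theo:minimumtriangle-APSP-hard}, for $\textsc{APSP}$.

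The main design choice is the non-edge distance $14M$: it must be large enough that every triple containing a non-edge pair has perimeter exceeding $27M$ (and hence every triangle of $G$), yet small enough that the triangle inequality $7M + 7M \geq 14M$ still holds; the value $14M$ meets both constraints tightly. The remaining verifications are routine case analysis on edge/non-edge combinations.
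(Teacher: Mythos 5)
Your reduction is correct, but it takes a genuinely different and in some ways cleaner route than the paper's. Both proofs reduce from $\textsc{MinimumTriangle}$ and share the same high-level plan: build a metric in which the minimum-perimeter triangle equals the minimum-weight triangle of $G$, and an oriented graph in which every shortest closed walk through a pair has the same fixed length, so that $\odil$ directly encodes the reciprocal of $T^*$. Where you differ is in how you achieve the ``fixed closed-walk length.'' The paper adjoins two auxiliary points $u,v$ at distance $30M$ from everything, shifts all original distances up by $15M$ (to $[22M,29M]$) so that these new points cannot create a smaller triangle, and orients the graph as $\{(p,u),(v,p)\mid p\in P\}\cup\{(u,v)\}$, forcing every closed walk through two original points to have length $6\cdot 30M = 180M$; the final formula is $w(C) = 180M/\odil(\oriented{G})-45M$. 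You instead keep the vertex set unchanged, use distances in $[7M,14M]$, and orient a directed Hamilton cycle, observing that in a functional digraph every nontrivial closed walk traverses the full cycle, so $|C_{\oriented{H}}(p,q)|$ equals the cycle length $H$ for every pair; the final formula is $T^* = H/\odil(\oriented{H})$. Your approach avoids the distance shift and the extra gadget vertices, and the key step—out-degree one forces a single closed walk—is a nice structural observation. The paper's version is slightly more robust in the sense that it does not rely on this degree-one property and so illustrates a gadget one could adapt if, say, the reduction needed stronger connectivity properties of $\oriented{G}$; but for the stated theorem your construction is sufficient and arguably preferable for its simplicity.
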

\begin{proof}
    Due to \cref{theo:minimumtriangle-APSP-hard}, it is sufficient to prove $\textsc{MinimumTriangle} \leq_3 \textsc{Odil}$.
    
    Let $G=(P,E)$ be an undirected graph with $|P|=n$ and weights $w: E \rightarrow [7M, 9M]$ in which the minimum weight cycle $C$ of $G$ is a triangle. Note that the weights of $G$ fulfil the triangle inequality.

We define a metric space $(P',|\cdot|)$ with $P'=P \cup \{u,v\}$ by the following distances:
  \begin{itemize}
         \item $|pq|=w(p,q)+15M$, for $\{p,q\}\in E$,
         \item $|pq|=14M+15M=29M$, for $\{p,q\}\notin E$, and
        \item $|up|=|vp|=|vu|=15M+15M=30M$ for $p\in P$.
    \end{itemize}   

  So, for each missing edge of a complete graph on $P$, we assign its weight to be $14M$. Therefore, we obtain a metric with distances in $\left[7M,14M\right]$.
  Then, we add the points $u$ and $v$ with distance $15M$ to each point in $p\in P$. 
 To  preserve the metric, we increase the distance of every tuple by $15M$.
 This construction is visualised in \cref{fig:APSP-hardness-sketch}.
 
 \begin{figure}[ht]
    \centering
    \includegraphics{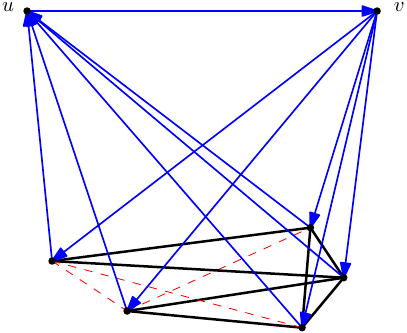}
    \caption{Graph construction in the proof of \cref{theo:minimumtriangle-APSP-hard}. Given an undirected graph on $P$~(black), we add two points $u,v$ with a large distance to all other points and augment it to a complete graph (red, dashed). The oriented graph $\oriented{G}$ (blue) ensures that every shortest closed walk containing $p$ and $q$ in $P$ visits $u$ and $v$.}
    \label{fig:APSP-hardness-sketch}
\end{figure}
 
    Let $\oriented{G}=(P',\oriented{E})$ be an oriented graph with 
     $\oriented{E}=\{(p,u),(v,p)\mid p\in P\} \cup \{(u,v)\}.$

We prove that the weight $w(C)$ of the minimum triangle can be computed by computing $\odil(\oriented{G})$. In particular, we show that $w(C)=\frac{180M}{\odil(\oriented{G})}-45M$.
    
    We proceed by computing the oriented dilation of $\oriented{G}$. 
    We first note that all edges of $\oriented{G}$ have weight $30M$. 
    
    For every tuple of $u$ or $v$ and a point $p\in P$, the shortest closed walk is a triangle of length  
   $|C_{\oriented{G}}(u,p)|=|C_{\oriented{G}}(v,p)|=|C_{\oriented{G}}(u,v)|=3\cdot 30M=90M$. 
   Their minimum triangle contains two edges of length $30M$.  
    Therefore, their dilation is
    \begin{equation*}
      \odil(u,p)= \frac{|C_{\oriented{G}}(u,p)|}{|\Delta^*(u,p)|}\leq \frac{90M}{2 \cdot 30M}=1.5.
    \end{equation*}
    Analogously, it holds that $\odil(v,p)\leq 1.5$ for $p\in P$.
   
    For $u$ and $v$, it holds that $\odil(u,v)=1$, because every triangle in $P'$ containing $u$ and $v$ is an oriented closed walk in $\oriented{G}$.

For two distinct points $p,q\in P$, the shortest closed walk containing $p$ and $q$ visits $u$ and $v$ and has length $|C_{\oriented{G}}(p,q)|=6\cdot 30M=180M$. For their minimum triangle holds 
       $ |\Delta^*(p,q)|\leq 3\cdot29M=87M$. 
    Since their dilation is larger than $2$,
    a worst-case pair of $\oriented{G}$ must be a pair in~$P$. 
    Therefore, the oriented dilation of the constructed graph $\oriented{G}$ is
    \begin{equation*}
        \odil(\oriented{G})=\underset{p,q\in P}{\max} \frac{|C_{\oriented{G}}(p,q)|}{|\Delta^*(p,q)|} 
        =\frac{180M}{\underset{p,q,p^*\in P}{\min} w'(p,q)+w'(q,p^*)+w'(p,p^*)}  >2.
    \end{equation*}
     Since $C$ is a triangle and $w(C)\leq 27M$, it holds that
    \begin{equation*}
        \underset{p,q,p^*\in P}{\min}|pq|+|qp^*|+|p,p^*| = \underset{p,q,p^*\in P}{\min} w(p,q)+w(q,p^*)+w(p,p^*)+3\cdot 15M= w(C)+45M.
    \end{equation*}
    Therefore, the oriented dilation of $\oriented{G}$ is achieved by the minimum-length triangle $C$.
    Consequently, given the oriented dilation of $\oriented{G}$, the length of $C$ is
       $w(C)=\frac{180M}{\odil(\oriented{G})}-45M$.

The most expensive operation in our reduction is defining the metric on $\bigo(n^2)$ points with distances in $\left[22M,30M\right]$. Following the definitions in~\cite{WilliamsW18}, all instances have size $\bigo(n^2\operatorname{polylog}(M))$ and all operations take subcubic time.
  
    Thus, computing the oriented dilation of an oriented metric graph is at least as hard as  finding a minimum weight triangle in an undirected graph.
    \end{proof}

To compute the oriented dilation, testing all $\binom{n}{2}$ point tuples seems to be unavoidable. However, to approximate the oriented dilation,  we show that, 
a linear set of tuples suffices.

\begin{theorem}\label{theo:approx-odil}
Let $P$ be a set of $n$ points in $\mathbb{R}^d$, let $\oriented{G}$ be an oriented graph on $P$, and let $\varepsilon>0$ be a real number.
Assume that, in $T(n)$ time, we can construct a data structure such that,
for any two query points $p$ and $q$ in $P$, we can return, in $f(n)$ time, a $k$-approximation to the length of a shortest path in $G$ between $p$ and $q$. Then we can compute a value $\overline{\odil}$ in $\bigo(T(n) + n(\log n +f(n)))$ time, such that
\begin{equation*}
    (1-\varepsilon)\cdot \odil(\oriented{G}) \leq  \overline{\odil}  \leq k\cdot \odil(\oriented{G}).
\end{equation*}
\end{theorem}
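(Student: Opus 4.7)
The plan is to reduce the $\binom{n}{2}$ candidate pairs to $\bigo(n)$ via a well-separated pair decomposition and then combine the given shortest-path data structure with the minimum-perimeter-triangle data structure from \cref{theo:approx-triangle-queries}. I would construct an $s$-WSPD with $s = \Theta(1/\varepsilon)$ and the triangle data structure with parameter $\varepsilon_1 = \Theta(\varepsilon)$, both in $\bigo(n \log n)$ time, and build the hypothesised path data structure in $T(n)$ time. Mirroring \cref{alg:wspd-oriented}, for each WSPD pair $\{A_i, B_i\}$ I would pick $\min\{|A_i|, 2\}$ points from $A_i$ and $\min\{|B_i|, 2\}$ from $B_i$, giving $\bigo(n)$ representative pairs $(a, b)$ in total. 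For each such pair I would query the path data structure in both directions to obtain $\overline{C}(a, b) := \overline{d}(a, b) + \overline{d}(b, a)$, which approximates $|C_{\oriented{G}}(a, b)| = d(a, b) + d(b, a)$, and query the triangle structure to obtain $\overline{\Delta}(a, b)$; finally report $\overline{\odil} := \max_{(a,b)} \overline{C}(a, b) / \overline{\Delta}(a, b)$.

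The upper bound $\overline{\odil} \leq k \cdot \odil(\oriented{G})$ is immediate: for every queried pair, $\overline{C}(a, b) \leq k \cdot |C_{\oriented{G}}(a, b)|$ by linearity of the $k$-approximation, while $\overline{\Delta}(a, b) \geq |\Delta^*(a, b)|$ since $\Delta^*$ is optimal, so $\overline{\odil}(a,b) \leq k \cdot \odil_{\oriented{G}}(a,b) \leq k \cdot \odil(\oriented{G})$.

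For the lower bound, let $(p^*, q^*)$ realise $\odil(\oriented{G})$ and let $\{A, B\}$ be the unique WSPD pair with $p^* \in A$ and $q^* \in B$. I would select representatives $(a, b)$ with $a = p^*$ whenever $p^*$ is picked (that is, whenever $|A| \leq 2$), and similarly $b = q^*$ when possible. If $a = p^*$ then $|C_{\oriented{G}}(p^*, a)| = 0$; otherwise $|A| \geq 3$, so a third point $a' \in A \setminus \{p^*, a\}$ exists, and WSPD-Property~(i) combined with the argument of \cref{eq:bound-Delta-same-set-pair} yields $|\Delta^*(p^*, a)| \leq \bigo(1/s) \cdot |\Delta^*(p^*, q^*)|$; an analogous bound holds for $q^*$ and $b$. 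Summing the oriented-path triangle inequalities $d(p^*, q^*) \leq d(p^*, a) + d(a, b) + d(b, q^*)$ and its reverse yields $|C_{\oriented{G}}(a, b)| \geq |C_{\oriented{G}}(p^*, q^*)| - |C_{\oriented{G}}(p^*, a)| - |C_{\oriented{G}}(q^*, b)|$. Bounding each error closed walk by $\odil(\oriented{G}) \cdot |\Delta^*(\cdot,\cdot)| \leq \bigo(1/s) \cdot |C_{\oriented{G}}(p^*, q^*)|$, together with $|\Delta^*(a, b)| \leq (1+8/s) |\Delta^*(p^*, q^*)|$ from \cref{theo:delta-ab-WSPD}, gives
\[\overline{\odil}(a, b) \geq \frac{|C_{\oriented{G}}(a, b)|}{(1+\varepsilon_1)\,|\Delta^*(a, b)|} \geq \frac{1 - \bigo(1/s)}{(1+\varepsilon_1)(1+8/s)} \cdot \odil(\oriented{G}),\]
which exceeds $(1-\varepsilon) \odil(\oriented{G})$ for suitably chosen $s$ and $\varepsilon_1$.

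The main obstacle is the boundary case in the lower bound when $|A|$ or $|B|$ is small: when $|A| \geq 3$ a third point of $A$ gives the WSPD bound on $|\Delta^*(p^*, a)|$, but when $|A| = 2$ no such point is available. Picking up to two representatives per set sidesteps this by forcing $p^*$ itself to be a representative whenever $|A| \leq 2$. The total running time consists of $T(n)$ for the path structure, $\bigo(n \log n)$ for the WSPD and triangle structure, and $\bigo(f(n) + \log n)$ per representative pair over $\bigo(n)$ pairs, for a total of $\bigo(T(n) + n(\log n + f(n)))$.
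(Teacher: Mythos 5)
Your proposal is correct and follows essentially the same strategy as the paper: use an $s$-WSPD with $s = \Theta(1/\varepsilon)$ to cut the number of tested pairs to $\bigo(n)$, pick up to two representatives per side so that the worst-case pair is either queried directly or closely bracketed, bound the error closed walks via the dilation and the WSPD triangle bounds, and combine with the approximate minimum-perimeter-triangle data structure. Your explicit derivation of $|C_{\oriented{G}}(p^*,q^*)| \le |C_{\oriented{G}}(p^*,a)|+|C_{\oriented{G}}(a,b)|+|C_{\oriented{G}}(b,q^*)|$ from the oriented shortest-path triangle inequality and its reverse is exactly the step the paper asserts directly; the case split on whether $p^*$ and $q^*$ are picked, the use of $|\Delta^*(p^*,a)| \le \bigo(1/s)|\Delta^*(p^*,q^*)|$ and $|\Delta^*(a,b)|\le(1+8/s)|\Delta^*(p^*,q^*)|$, and the runtime accounting all match the paper's Algorithm~\ref{alg:approx-odil} and its analysis.
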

\begin{proof}
Let the constants in \cref{alg:approx-odil} be equal to $\varepsilon_1= {\varepsilon}/{2}$ and $s= {28}/{\varepsilon}$.
Let $t^*$ be the dilation of a given oriented graph $\oriented{G}$ on $P$.
We prove that 
 \[ (1-\varepsilon)t^* \leq  \overline{\odil} \leq kt^*,\]
where $\overline{\odil}$ is the dilation approximated by \cref{alg:approx-odil}. 

Throughout this proof, let $p,q\in P$ be a worst-case pair in $\oriented{G}$, i.e., $\odil(p,q)=t^*$.

\subparagraph{Upper bound.}
    For each well-separated pair $\{A,B\}$, \cref{alg:approx-odil} picks two points $a\in A$ and $b\in B$, approximates the length of a shortest closed walk $C_{\oriented{G}}(a,b)$ containing $a$ and $b$ and their minimum triangle $\Delta^*(a,b)$. 
    It holds that
    \begin{equation*}
        \overline{\odil}
        \leq \underset{a\in A, b\in B}{\max} \frac{k\cdot |C_{\oriented{G}}(a,b)|}{|\Delta^*(a,b)|} = k\cdot \underset{a\in A, b\in B}{\max} \odil(a,b) \leq k \cdot\odil(\oriented{G}) = k \cdot t^*.
    \end{equation*}
    
  \subparagraph{Lower bound.} 
   Let $\{A,B\}$ be the well-separated pair with $p\in A$ and $q\in B$. We distinguish cases based on whether $p$ and/or $q$ are picked by \cref{alg:approx-odil} while processing the pair $\{A,B\}$ or not.

\subparagraph{Case 1:} neither $p$ nor $q$ is picked by \cref{alg:approx-odil}.

   Let $a\in A$ and $b\in B$ be the points picked by \cref{alg:approx-odil}. 

First, we bound the length of a shortest closed walk containing $p$ and $q$ by
\begin{equation*}
    |C_{\oriented{G}}(p,q)|\leq |C_{\oriented{G}}(p,a)|+|C_{\oriented{G}}(a,b)|+|C_{\oriented{G}}(b,q)|\leq t^*|\Delta^*(p,a)|+|C_{\oriented{G}}(a,b)|+t^*|\Delta^*(b,q)|.
\end{equation*}

Since $p$ and $q$ are both not picked, this implies $|A|\geq 3$ and $|B|\geq 3$. Analogously to \cref{eq:bound-Delta-same-set-pair}, we bound $|\Delta^*(p,a)|$ and $|\Delta^*(b,q)|$ both by $3/s |\Delta^*(p,q)|$. It follows that
    \begin{equation*}
        |C_{\oriented{G}}(p,q)|\leq 2\cdot  3/s\cdot t^* \cdot|\Delta^*(p,q)|+|C_{\oriented{G}}(a,b)| = 6/s \cdot |C_{\oriented{G}}(p,q)|+|C_{\oriented{G}}(a,b)|  .
    \end{equation*}

  Due to \cref{theo:delta-ab-WSPD}, we use $|\Delta^*(a,b)|\leq (1+8/s) |\Delta^*(p,q)|$ to bound
\begin{equation*}
 |C_{\oriented{G}}(p,q)|\leq \frac{s}{s-6}  |C_{\oriented{G}}(a,b)|= \frac{s}{s-6}  \odil(a,b) |\Delta^*(a,b)| \leq  \frac{s+8}{s-6}  \odil(a,b) |\Delta^*(p,q)|.
\end{equation*}

Since $t^*=\frac{|C_{\oriented{G}}(p,q)|}{|\Delta^*(p,q)|}$, this leads us to a lower bound of the dilation of $a$ and $b$, which is
\begin{equation*}
     \odil(a,b) \geq  \frac{s-6}{s+8} \cdot t^* \geq \left(1- 14/s\right) t^* .
\end{equation*}

For the picked points $a$ and $b$, \cref{alg:approx-odil} computes an approximation of $C_{\oriented{G}}(a,b)$ and a $(1+\varepsilon_1)$-approximation of $\Delta^*(p,b)$. It holds that
 \begin{equation*}
        \overline{\odil} \geq \frac{|C_{\oriented{G}}(a,b)|}{(1+\varepsilon_1)|\Delta^*(a,b)|} \geq (1-\varepsilon_1) \odil(a,b) \geq (1-\varepsilon_1)\left(1- 14/s\right) t^*.
    \end{equation*}

   Since  $\varepsilon_1=\varepsilon/2$ and $s=28/\varepsilon$, we conclude that 
 \begin{equation*}
        \overline{\odil} \geq 
        \left(1-\varepsilon/2\right)^2\cdot t^*
       \geq (1-\varepsilon)\cdot t^*.
    \end{equation*}

  \subparagraph{Case 2:} either $p$ or $q$ is picked by \cref{alg:approx-odil}.

Without loss of generality, let $p$ be picked for $A$ and a distinct point $b$ be picked for $B$.

Analogously to Case 1, we can show 
\begin{equation*}
    |C_{\oriented{G}}(p,q)|\leq |C_{\oriented{G}}(p,b)|+|C_{\oriented{G}}(b,q)|\leq  \frac{s+8}{s-6}  \odil(a,b) |\Delta^*(p,q)|, 
\end{equation*}
which leads us to 
 \begin{equation*}
        \overline{\odil} \geq (1-\varepsilon_1) \odil(a,b) \geq (1-\varepsilon) t^*.
    \end{equation*}

 \subparagraph{Case 3:} both $p$ and $q$ are picked by \cref{alg:approx-odil}.
  
Since \cref{alg:approx-odil} computes an approximation of $C_{\oriented{G}}(p,q)$ and $(1+\varepsilon_1)$-approximation of $\Delta^*(p,q)$, it holds that
 \begin{equation*}
        \overline{\odil} \geq \frac{|C_{\oriented{G}}(p,q)|}{(1+\varepsilon_1)|\Delta^*(p,q)|}\geq (1-\varepsilon_1)  \odil(p,q)  \geq (1-\varepsilon)t^*,
    \end{equation*}
where the last inequality  follows from  $\varepsilon_1=\varepsilon/2$.\bigskip

   The preprocessing is dominated by computing the data structure for answering approximate shortest pair and    
   a well-separated pair decomposition in $\bigo(T(n) + n \log n)$ time (\cref{theo:wspd-construction}).
     For each of the $\bigo(n)$ pairs, we pick two points $a$ and $b$ and approximate their smallest triangle in $\bigo(\log n)$ time (compare to~\cref{theo:approx-triangle-queries}). The runtime to approximate the length of a shortest closed walk containing $a$ and $b$ depends on the runtime $f(n)$ of the used approximation algorithm. Therefore, the oriented dilation of a given graph can be approximated in $\bigo(T(n) + n(\log n +f(n)))$ time.
\end{proof}

Our approximation algorithm (\cref{alg:approx-odil}) uses the WSPD to achieve a suitable selection of $\bigo(n)$ point tuple that needs to be considered. 
The algorithm and its analysis work similar to our $(2+\varepsilon)$-spanner construction (compare to \cref{alg:wspd-oriented}). 
 For each selected tuple, the algorithm approximates their oriented dilation by approximating their minimum triangle and shortest closed walk.

The precision and runtime of this approximation depend on approximating the length of a shortest path between two query points. 
The point-to-point shortest path problem in directed graphs is an extensively studied problem (see \cite{goldberg05.P2P} for a survey). 
Alternatively, shortest path queries can be preprocessed via approximate APSP (see~\cite{DoryFKNWV24.APSP,SahaY24.APSP,Zwick02.APSP}).
For planar directed graphs, the authors in~\cite{ArikatiCCDSZ96} present a data structure build in $\bigo(n^2 / r)$ time, such that the exact length of a shortest path between two query points can be returned in $\bigo(\sqrt{r})$ time. 
Setting $r=n$ minimizes the runtime of our \cref{alg:approx-odil} to $\bigo(n \sqrt{n})$. 
It should be noted that even with an exact shortest path queries ($k=1$) the oriented dilation remains an approximation due to the minimum-perimeter triangle approximation.


\begin{algorithm}[ht]
	\caption{Approximate Oriented Dilation}\label{alg:approx-odil}
     \begin{algorithmic}
    		\REQUIRE Oriented graph $\oriented{G}=(P,\oriented{E})$, positive reals $\varepsilon$, $\varepsilon_1$, $s$, parameter $k$
    		\ENSURE Value $\overline{\odil}$ , such that $(1-\varepsilon)\odil(\oriented{G}) \leq  \overline{\odil} \leq k \odil(\oriented{G})$
            \STATE Compute $s$-WSPD with the pairs $\{A_i,B_i\}$ for $1 \leq i \leq m$
            \STATE Compute the data structure $M$ on $P$ to approximate nearest neighbours (see \cref{theo:nearest-neighbour-query}~\cite{AryaMountDS1998})%
            \STATE $\overline{\odil}=1$
    
             \FOR{$i=1$ \textbf{to} $m$}
            \STATE $A=$ Pick $\min\{|A_i|,2\}$ points of $A_i$
            \STATE $B=$ Pick $\min\{|B_i|,2\}$ points of $B_i$
            \FORALL{$\{a,b\}\in A\times B$}
                \STATE $\overline{d}(a,b)=$ $k$-approximation of the length of a shortest path from $a$ to $b$ in $\oriented{G}$
                 \STATE $\overline{d}(b,a)=$ $k$-approximation of the length of a shortest path from $b$ to $a$ in $\oriented{G}$
                \STATE $\Delta_{abc}=$ $(1+\varepsilon_1)$-approximation of $\Delta^*(a,b)$ by \cref{alg:approx-min-triangle} given $M$, $a$, $b$, $\varepsilon_1$
                 \STATE $\overline{\odil}=\max\bigg\{ \frac{\overline{d}(a,b)+\overline{d}(b,a)}{|\Delta_{abc}|}, \overline{\odil}\bigg\}$
                 \ENDFOR
        \ENDFOR
        \RETURN  $\overline{\odil}$
\end{algorithmic}
\end{algorithm}

\section{Conclusion and Outlook}
We present an algorithm for constructing a sparse oriented $(2+\varepsilon)$-spanners for multidimensional point sets. 
Our algorithm computes such a spanner efficiently in $O(n \log n)$ time for point sets of size $n$. 
In contrast,~\cite{ESA23} presents a plane oriented $\mathcal{O}(1)$-spanner, but only for points in convex position. Developing algorithms for constructing plane oriented $\bigo(1)$-spanners for general two-dimensional point sets remains an open problem. Another natural open problem is to improve upon $t = 2+\varepsilon$. For this, the question arises: Does every point set admit an oriented $t$-spanner with $t<2$. Even for complete oriented graphs this is open.

Both algorithms could be improved by optimally orienting the underlying undirected graph, which is constructed first.
We discard this idea by proving that, given an undirected graph, it is NP-hard to find its minimum dilation orientation. 
In particular, is the problem NP-hard for plane graphs or for complete graphs?

In the second part of this paper, we study the problem of computing the oriented dilation of a given graph. We prove APSP-hardness of this problem for metric graphs. We complement this by a subcubic approximation algorithm. This algorithm depends on approximating the length of a shortest path between two given points. Therefore, improving the approximation of a shortest path between two given points in directed/oriented graphs is an interesting problem. The APSP-hardness of computing the oriented dilation of metric graphs seems to be dominated by the computation of a minimum-perimeter triangle. However, for Euclidean graphs, the minimum-perimeter triangle containing two given points can be computed in $o(n)$ time. This raises the question: is computing the oriented dilation of a Euclidean graph easier than for general metric graphs? 




Finally, as noted in~\cite{ESA23}, in many applications some bi-directed edges might be allowed. This opens up a whole new set of questions on the trade-off between dilation and the number of bi-directed edges. Since this is a generalisation of the oriented case, both of our hardness results also apply to such models.

\bibliography{bibliography-arxiv}

\appendix

\section{Subcubic Equivalences Between Path and Triangle Problems}\label{sec:williamsproof}

In \cite{WilliamsW18}, Vassilevska Williams and Williams proved for a list of various problems that either all of them have subcubic algorithms, or none of them do. This list includes
\begin{itemize}
    \item computing a shortest path for each pair of points on weighted graphs (known as APSP),
    \item finding a minimum weight cycle in an undirected graph with non-negative weights (called \textsc{MinimumCycle}), and
    \item detecting if a  weighted graph has a triangle of negative edge weight (called \textsc{NegativeTriangle}).
\end{itemize}

\begin{theorem}[\textsc{NegativeTriangle} $\leq_3$ \textsc{MinimumCycle}~\cite{WilliamsW18}]
Let $G$ an undirected graph with weights in $\left[1,M\right]$.
Finding a minimum weight cycle in an undirected graph with non-negative weights is at least as hard as deciding if a weighted graph contains a triangle of negative weight. 
\end{theorem}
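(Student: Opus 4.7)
The plan is a simple weight-shifting reduction, closely following the strategy already used in the paper to prove the $\textsc{MinimumTriangle}$ theorem. Given an instance of $\textsc{NegativeTriangle}$, namely an undirected graph $G=(V,E)$ with weights $w:E\to[-M,M]$, I would construct a graph $G'$ on the same vertex and edge set, but with every weight shifted by $+8M$, so that $w'(e) = w(e)+8M \in [7M,9M]$. The answer to $\textsc{NegativeTriangle}$ is then read off from the output of $\textsc{MinimumCycle}$ on $G'$ by comparing it against the threshold $24M$: report ``yes'' iff the minimum cycle weight in $G'$ is strictly smaller than $24M$.

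The key calculation is that any triangle in $G'$ has weight at most $3\cdot 9M = 27M$, while any simple cycle in $G'$ of length $k\geq 4$ has weight at least $7kM\geq 28M$. Consequently, whenever $G$ contains at least one triangle, a minimum weight cycle in $G'$ is necessarily a triangle. Since a triangle $T$ in $G$ satisfies $w'(T)=w(T)+24M$, we have $w(T)<0$ if and only if $w'(T)<24M$. If $G$ contains no triangle at all, then trivially there is no negative triangle, and every cycle in $G'$ has weight at least $28M>24M$, so the threshold test still returns the correct ``no'' answer.

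Because the reduction itself runs in time $O(|V|+|E|)$ and introduces only a constant blow-up in weight magnitude, any truly subcubic algorithm for $\textsc{MinimumCycle}$ on graphs with non-negative weights (after a trivial rescaling into the $[1,M']$ regime of the statement) yields a truly subcubic algorithm for $\textsc{NegativeTriangle}$, giving $\textsc{NegativeTriangle}\leq_3\textsc{MinimumCycle}$. The only delicate point is the inequality $27M<28M$ that fixes the choice of shift constant; I do not anticipate any real obstacle beyond carefully verifying this separation, which is routine.
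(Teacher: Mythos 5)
Your proposal is correct and takes essentially the same approach as the paper: the identical $+8M$ weight shift to $[7M,9M]$, the identical $27M < 28M$ separation between triangles and longer cycles, yielding that the minimum cycle in $G'$ is a minimum triangle of $G$. You additionally spell out the threshold comparison against $24M$ and the triangle-free edge case, which the paper leaves implicit, but this is the same argument.
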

\begin{proof}
    The following proof is restated from~\cite{WilliamsW18}. 
    Let $G=(V,E)$ be  a given undirected graph with weights $w: E\rightarrow \left[-M,M\right]$. We define an undirected graph $G'=(V,E)$, which is just $G$, but with weights $w': E\rightarrow \left[7M,9M\right]$ defined as $w'(e)=w(e)+8M$. 
    Any cycle $C$ in $G$ with $k$ edges, has length $w'(C)=8Mk+w(C)$, and it holds that $7Mk\leq w'(C)\leq 9Mk$. 
    Therefore, every cycle $C$ with at least four edges has length $w'(C)\geq 28M$ and any triangle $\Delta$ has length $w'(\Delta)\leq 27M<28M$. 
    Thus, a minimum weight cycle in $G'$ is a minimum weight triangle in $G$.
\end{proof}

However, since the minimum weight cycle in the constructed graph $G'$ is a triangle, we immediately obtain the following theorem. 

\minimumTriangle*

\end{document}